\documentclass[letterpaper, 10 pt, conference]{ieeeconf}  

\IEEEoverridecommandlockouts                              

\usepackage{cite}
\usepackage{amsmath,amssymb,amsfonts}
\usepackage{algorithm}
\usepackage{algorithmic}
\usepackage{graphicx}
\usepackage{textcomp}
\usepackage{bbm}
\usepackage{color}
\usepackage{lipsum}

\newtheorem{theorem}{Theorem}
\newtheorem{proposition}{Proposition}

\newtheorem{lemma}{Lemma}

\newtheorem{assumption}{Assumption}
\usepackage{stmaryrd}

\newcommand{\interval}[1]{\left[#1\right]}
\newcommand\intsimple[1]{\left\llbracket #1 \right\rrbracket}
\newcommand{\myvec}[2]{%
\left[
\begin{array}{c}
#1\\
#2
\end{array}
\right]
}

\newcommand{\II}{\mathcal{I}}
\newcommand{\MM}{\mathcal{M}}
\newcommand{\RR}{\mathbb{R}}
\newcommand{\TT}{\mathbb{T}}
\newcommand{\XX}{\mathcal{X}}
\newcommand{\UU}{\mathcal{U}}
\newcommand{\WW}{\mathcal{W}}
\renewcommand{\SS}{\mathcal{S}}

\newcommand{\BB}{\mathcal{B}}
\newcommand{\Ac}{\hat{A}}
\newcommand{\Bc}{\hat{B}}
\newcommand{\AKc}{\hat{A}_K}
\newcommand{\AK}{\mathcal{A}_K}
\newcommand{\MD}{\II_S}
\newcommand{\MDK}{\II_{A_K}}
\newcommand{\Mdelta}{\II_{\Delta}}
\newcommand{\bbox}[1]{\mathbb{B}\left(#1\right)}
\newcommand{\qed}{\hfill$\square$}
\newcommand{\EE}{\mathcal{E}}
\newcommand{\1}{\mathbbm{1}}

\title{\LARGE \bf Data-Driven Robust Predictive Control\\ with Interval Matrix Uncertainty Propagation}

\author{Renato Quartullo$^1$, Andrea Garulli$^2$, Mirko Leomanni$^3$\thanks{$^1$Uninettuno University, Rome, Italy. Email: renato.quartullo@uninettunouniversity.net, $^2$Universit\`a di Siena, Italy. Email: garulli@diism.unisi.it, $^3$Mercatorum University, Rome, Italy. Email: mirko.leomanni@unimercatorum.it}}

\begin{document}

	\maketitle
	\thispagestyle{empty}
	\pagestyle{empty}
	
\begin{abstract}
This paper presents a new data-driven robust predictive control law, for linear systems affected by unknown-but-bounded process disturbances. 
A sequence of input-state data is used to construct a suitable uncertainty representation based on interval matrices. Then, the effect of uncertainty along the prediction horizon is bounded through an operator leveraging matrix zonotopes. This yields a tube that is exploited within a variable-horizon optimal control problem, to guarantee robust satisfaction of state and input constraints. The resulting data-driven predictive control scheme is proven to be recursively feasible and practically stable. 
A numerical example shows that the proposed approach compares favorably to existing methods based on zonotopic tubes.
\end{abstract}

\section{Introduction}\label{sec:introduction}

Data-driven control has been an active research area for long time, receiving increasing attention in recent years due to the striking development of machine learning techniques \cite{dorfler2023data,soudbakhsh2023data}.
Within this broad research landscape, Data-Driven Predictive Control (DDPC) has emerged as a promising control design framework, after the publication of the seminal work \cite{coulson2019data}. In that paper, an approach based on behavioural system theory is adopted to directly predict future trajectories from a set of past input-output data. 
This allows one to formulate a predictive control scheme without explicitly defining the system dynamics, which would require either to resort to first-principles modeling or to apply some system identification procedure.
When data are not affected by noise, the solution proposed in \cite{coulson2019data} turns out to be equivalent to classical model-based predictive control (MPC), thus inheriting its well-known capability to handle multivariable constrained control problems  \cite{rawlings2017model,borrelli2017predictive}. Moreover, in \cite{fiedler2021relationship} it is shown that DDPC is also equivalent to subspace predictive control \cite{favoreel1999spc}, an indirect approach in which the prediction model is identified from data.
An intense research activity has been devoted to the setting in which data are affected by noise. The case of bounded output noise is treated in \cite{berberich2021data}, by using slack variables and a regularization term. Regularization is also used to deal with stochastic output noise in \cite{fiedler2021relationship,dorfler2022bridging}. A unified framework encompassing several regularization-based approaches is presented in \cite{breschi2023data}.
Closed-loop stability properties have been also investigated. A terminal equality constraint is used in \cite{berberich2021data} to derive conditions for recursive feasibility and practical exponential stability of a DDPC scheme with bounded output noise. A similar approach is proposed in \cite{kloppelt2025novel} for systems affected by bounded process disturbances. The key idea is to cope with the effect of disturbances by suitably tightening the state and input constraints, similarly to what is done in tube-based MPC \cite{chisci2001systems,mayne2005robust}.  
For a review of the theoretical properties and implementation details of different versions of DDPC, the interested reader is referred to \cite{verheijen2023handbook,berberich2025overview}.

Recent works have addressed DDPC for systems affected by bounded process disturbances, exploiting results on reachability analysis based on matrix zonotopes \cite{alanwar2021data}. In \cite{alanwar2022robust}, a robust control scheme is presented in which data-driven reachable regions are used to achieve robust satisfaction of the state and input constraints along the prediction horizon. A similar approach is taken in \cite{russo2023tube}, where probabilistic stability of the resulting scheme is shown. Both methods do not address recursive feasibility, which is instead guaranteed by the solution proposed in \cite{farjadnia2024robust}. However, this result is based on a rigid tube, whose computation requires the knowledge of the data covering radius (namely, a measure of the data density in the input-state space), which is not available a priori and whose estimate form data is not easy \cite{alanwar2023data}. 

In this paper, a new data-driven robust predictive control scheme is proposed, for linear discrete-time systems affected by unknown-but-bounded process disturbances. The set of systems compatible with a collection of state-input data is characterized in terms of interval matrices, in two distinct ways: using a data-driven approach and via set-membership identification. Then, the uncertainty propagation along the prediction horizon is approximated by using a suitable operator based on matrix zonotopes.
The considered interval matrix uncertainty representation has a twofold benefit: i) it allows one to guarantee recursive feasibility and practical stability of a variable-horizon robust predictive control scheme, without requiring any assumption on the data density;  ii) it provides an efficient way to compute a tube bounding the uncertainty propagation, which turns out to be less conservative than that used in~\cite{farjadnia2024robust}.
A further contribution is the comparison of the two proposed robust predictive control laws, based respectively on data-driven and set-membership uncertainty representation, showing that the two approaches yield comparable results for process disturbances of moderate size.

The rest of the paper is organized as follows. Section \ref{sec:notation} introduces some preliminary material on matrix operations and set approximations. The problem formulation is presented in Section \ref{sec:problem}. The proposed solution based on interval matrix uncertainty propagation is detailed in Section \ref{sec:intervalMPC}, which also states the closed-loop properties of the control scheme. Section \ref{sec:numerical} reports numerical results showing the performance of the proposed approach. Finally, some conclusions are drawn in Section \ref{sec:conclusions}.

	\section{Notation and preliminaries}
    \label{sec:notation}

For sets of real matrices $\mathcal{A}$, $\mathcal{B}$, the following operations are considered:
$\mathcal{A}\oplus\mathcal{B} = \{A+B|\,\,A\in\mathcal{A},B\in\mathcal{B}\}$,
$\mathcal{A}\ominus\mathcal{B} = \{M|\,\,M \oplus \mathcal{B}  \subseteq \mathcal{A}\}$,
$\mathcal{A}\mathcal{B} = \{AB|\,\,A\in\mathcal{A},B\in\mathcal{B}\}$, $\mathcal{A}^j = \{A^j|\,\,A\in\mathcal{A}\}$.
An interval matrix $\II=\{M | \,\, C-\Delta \leq M \leq C+\Delta \}$, where
the operator $\leq$ represents elementwise inequality, is compactly denoted as $\II = C \oplus 
\intsimple{\Delta}$.
A matrix zonotope
$
\MM = \Big\{ M | 
M = M_C + \sum_{i=1}^{n_g} G_i \beta_i,\; |\beta_i|\leq 1 \Big\}$,
where $M_C$ is the center and 
$G_i$, $i=1,\dots,n_g$, 
are the generators, is denoted as $\MM = \langle M_C;\,G_1, \ldots,\, G_{n_g} \rangle$.

The minimum interval matrix containing a matrix zonotope 
$\MM = \langle M;\, G_1, \ldots, G_{n_g} \rangle$,  is given by
\begin{equation}
\bbox{\MM} = M \oplus \intsimple{\Delta},
\label{eq:boxoutzon}
\end{equation} 
where $\Delta = \sum_{i=1}^{n_g} |G_i|.$

Given a matrix $ M \in \mathbb{R}^{l \times q} $, the set of matrices $\mathcal{E}(M) = \left\{ M^{(k)} \right\}_{k=1}^{lq}$, contains all matrices $ M^{(k)} \in \mathbb{R}^{l \times q} $ defined as
$$
M^{(k)}_{ij} =
\begin{cases}
M_{ij}, & \text{if } k = (i - 1)q + j, \\
0, & \text{otherwise},
\end{cases}
$$
so that $\sum_{k=1}^{lq} M^{(k)} = M$.

The following result provides a key set approximation tool that will be exploited for uncertainty propagation.
\begin{proposition} \label{prop:overapprox}
Let $\II = C \oplus \intsimple{\Delta} \in \mathbb{R}^{l \times p}$ be an interval matrix and $\MM=\langle M;\, G_1, \ldots, G_{n_g}\rangle\subseteq \mathbb{R}^{p \times q}$ be a matrix zonotope. Then 
\begin{equation}\label{eq:Topdef}
\II\MM \subseteq \TT_{\II}(\MM) = \langle CM;\, CG_1, \ldots, CG_{n_g},\, F^{(1)}, \ldots, F^{(lq)} \rangle,
\end{equation}
where 
\begin{equation} \label{eq:Fgen}
F \;=\; {\Delta}\!\left( \,\lvert M \rvert + \sum_{j=1}^{n_g} \lvert G_j \rvert \,\right)
\end{equation}
and $\{F^{(i)}\}_{i=1}^{lq} = \mathcal{E}(F)$.
\end{proposition}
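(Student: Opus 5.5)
Take an arbitrary element of $\II\MM$. It can be written as $AN$ with $A = C + E$, where $|E|\le \Delta$ elementwise, and $N = M + \sum_{i=1}^{n_g} G_i\beta_i$ with $|\beta_i|\le 1$. Expanding the product gives
\begin{equation*}
AN = CM + \sum_{i=1}^{n_g} (CG_i)\beta_i + E N .
\end{equation*}
The first two terms already have the form of an element of the matrix zonotope $\langle CM;\, CG_1,\ldots,CG_{n_g}\rangle$, with the same coefficients $\beta_i$. It therefore remains to show that the residual $EN$ can be written as $\sum_{k=1}^{lq} F^{(k)}\gamma_k$ for some coefficients $|\gamma_k|\le 1$. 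Adding the two representations then shows that $AN$ belongs to $\TT_{\II}(\MM)$ as defined in \eqref{eq:Topdef}.

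To bound the residual, I would estimate it entrywise using the triangle inequality and $|E|\le\Delta$. This gives
\begin{equation*}
|EN| \le |E|\,|N| \le \Delta\Big(|M| + \sum_{j=1}^{n_g}|G_j||\beta_j|\Big) \le \Delta\Big(|M| + \sum_{j=1}^{n_g}|G_j|\Big) = F ,
\end{equation*}
where all absolute values and inequalities are elementwise. The bound uses the fact that $\Delta$ and the absolute values are nonnegative, so matrix multiplication preserves elementwise inequalities between nonnegative matrices. Consequently every entry satisfies $|(EN)_{rs}| \le F_{rs}$, so $EN$ lies in the interval matrix $0\oplus\intsimple{F}$.

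The last step expresses this interval matrix as a zonotope with the single-entry generators $\mathcal{E}(F)$. For index $k=(r-1)q+s$:
\begin{itemize}
\item if $F_{rs}>0$, set $\gamma_k = (EN)_{rs}/F_{rs}\in[-1,1]$;
\item if $F_{rs}=0$, then $(EN)_{rs}=0$, so any choice works, for instance $\gamma_k=0$.
\end{itemize}
Since $F^{(k)}$ has $F_{rs}$ in position $(r,s)$ and zeros elsewhere, $\sum_k F^{(k)}\gamma_k = EN$. This is the same box–zonotope correspondence as \eqref{eq:boxoutzon}, used in the reverse direction.

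The only delicate point is the entrywise bound $|EN|\le \Delta|N|$ and its dependence on $\beta$. The difficulty is that the $\gamma_k$ depend on both $E$ and $\beta$, so they are not linked to the $\beta_i$. Because the conclusion is only an inclusion, and not an equality of parametrisations, this dependence is harmless. I expect no real obstacle beyond stating these points carefully.
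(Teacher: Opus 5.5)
Your proof is correct and uses the same decomposition as the paper. You split off $C\MM=\langle CM;\,CG_1,\ldots,CG_{n_g}\rangle$, enclose the remaining term $\intsimple{\Delta}\MM$ in the box $\intsimple{F}$, and rewrite that box as $\langle 0;\EE(F)\rangle$. The only difference is presentation: the paper works at the level of sets via the inclusions \eqref{eq:setdp1}, \eqref{eq:intxM}, \eqref{eq:intprod} and \eqref{eq:intsum}, whereas you chase a single element with the entrywise triangle inequality, which makes the box-to-generator step and the loss of coupling between the $\gamma_k$ and the $\beta_i$ explicit.
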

\emph{Proof.} See Appendix.

If $j$ successive iterations of the operator $\TT_{\II}$ are performed on the matrix zonotope $\MM$, one obtains the set approximation
\begin{equation}
\label{eq:IjMapprox}
\II^j\MM\subseteq\TT_{\II}^j(\MM)   =  \underbrace{\TT_{\II} \circ \TT_{\II} \circ \cdots \circ \TT_{\II}}_{j \text{ times}} (\MM) .
\end{equation}

       \section{Problem Formulation}\label{sec:problem}
        Consider the linear time-invariant system
        \begin{equation}\label{eq:sys}
            x(k+1) = Ax(k) + Bu(k) + w(k),
        \end{equation}
        where $x(k) \in\RR^n$ is the system state at time $k$, $u(k) \in\RR^m$ is the control input, $w(k)\in\RR^n$ is the process disturbance and $A,\, B$ are unknown system matrices. 
        It is assumed that $w(k)$ is bounded componentwise, i.e. 
        $$
        w(k) \in\WW= \intsimple{\overline{w}},
        $$
        with $\overline{w} \in \RR^n$, $\overline{w} \geq 0$.

        Suppose that a sequence of input-state data of length $T$ generated by system \eqref{eq:sys} is available. These samples are organized into the data matrices
        \begin{align}
        X_+ &= \left[x(1)\; x(2)\; \cdots\; x(T)\right], \label{eq:Xplus}\\
        X_- &= \left[x(0)\; x(1)\; \cdots\; x(T-1)\right],\label{eq:Xminus}\\
        U_- &= \left[u(0)\; u(1)\; \cdots\; u(T-1)\right].\label{eq:Uminus}    
        \end{align}
        
       The objective of this work is to design a robust predictive control scheme using only the available input-state data, without explicit knowledge of the system matrices $A$ and $B$. The resulting closed-loop system must satisfy the state and input constraints
        \begin{equation}\label{eq:constraint}
            x(k) \in \XX,\,u(k) \in \UU,
        \end{equation}
        for any possible realization of $w(k)$,
        with $\XX$, $\UU$ being convex polytopic sets. 
        
        To proceed, we first describe two different ways to compute interval matrix bounds for the unknown system matrices $A$ and $B$, that are consistent with the collected dataset: one directly based on data matrices \eqref{eq:Xplus}-\eqref{eq:Uminus}, and one exploiting set-membership identification. Then, a general framework is introduced as a basis for the proposed data-driven robust predictive control approach. 
        
        \subsection{Data-driven Interval System Matrices}\label{subsec:DD}
            Let us define the state-input data matrix
            \begin{equation}
            \Phi =
            \begin{bmatrix}
            X_- \\
            U_-
            \end{bmatrix}
            \in \mathbb{R}^{(n+m)\times T}.
            \label{eq:datamat}
            \end{equation}
            Then the trajectories collected from system~\eqref{eq:sys} satisfy
            \begin{equation}\label{eq:data_relation}
            X_+ = \Theta \Phi + W,
            \end{equation}
            where
            $$
            \Theta = \left[A \quad B\right] \in \mathbb{R}^{n \times (n+m)},
            $$
            and $W = \left[w(0)\; w(1)\; \cdots \; w(T-1) \right]$. 
            The following assumption is made, which is standard in data-driven control
            \begin{equation}
            \operatorname{rank}(\Phi) = n+m.
            \label{eq:rankass}
            \end{equation}
	   Now, let $\Sigma_\Theta$ be the set of all matrices $\Theta$ that are compatible with the given data, i.e.
	   $$
	    \Sigma_\Theta=\left\{ \Theta: ~ X_+ - \Theta \Phi \in \intsimple{\,\overline{W}\,} \right\}
	    $$
	    where $\overline{W}=[ \overline{w} ~ \overline{w} \dots \overline{w}]$.
	    The following result holds.
            \begin{proposition}
            \label{prop:intsys}
            Let \eqref{eq:rankass} hold. Then
            \begin{equation}
            \Sigma_\Theta \subseteq X_+ \Phi^\dag \oplus \intsimple{\Delta_S}
            \label{eq:intsys}
            \end{equation}
            with $\Delta_S\!=\!\overline{w} \, \1^\top | \Phi^\dag |$,
            where $\1$ denotes the all-ones vector and $\Phi^\dag$ is the right pseudoinverse of $\Phi$.
            \end{proposition}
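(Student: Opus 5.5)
Take an arbitrary $\Theta \in \Sigma_\Theta$ and define its residual matrix $E = X_+ - \Theta\Phi$. By the definition of $\Sigma_\Theta$, every entry satisfies $|E_{ij}| \le \overline{w}_i$, i.e. $E \in \intsimple{\,\overline{W}\,}$. The goal is to show that $\Theta - X_+\Phi^\dag$ is bounded elementwise in absolute value by $\Delta_S$, which is exactly the claimed membership $\Theta \in X_+\Phi^\dag \oplus \intsimple{\Delta_S}$.

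\textbf{Step 1: rewrite $\Theta$ via the pseudoinverse.} By the rank assumption \eqref{eq:rankass}, $\Phi$ has full row rank. Its right pseudoinverse $\Phi^\dag = \Phi^\top(\Phi\Phi^\top)^{-1}$ is therefore well defined and satisfies $\Phi\Phi^\dag = I_{n+m}$. Right-multiplying $X_+ = \Theta\Phi + E$ by $\Phi^\dag$ gives
\begin{equation*}
\Theta = X_+\Phi^\dag - E\Phi^\dag,
\end{equation*}
so the deviation from the center is $\Theta - X_+\Phi^\dag = -E\Phi^\dag$.

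\textbf{Step 2: bound the deviation entrywise.} Using the triangle inequality on each entry,
\begin{equation*}
|(E\Phi^\dag)_{ij}| = \Big|\sum_{t=1}^{T} E_{it}\Phi^\dag_{tj}\Big| \le \sum_{t=1}^{T} \overline{w}_i\, |\Phi^\dag_{tj}| = \big(\overline{w}\,\1^\top|\Phi^\dag|\big)_{ij} = (\Delta_S)_{ij}.
\end{equation*}
Hence $|\Theta - X_+\Phi^\dag| \le \Delta_S$ elementwise, which gives \eqref{eq:intsys}.

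Equivalently, $\intsimple{\,\overline{W}\,}$ can be viewed as the matrix zonotope with generators $\overline{W}^{(k)}$. Then $E\Phi^\dag$ lies in a zonotope whose interval hull, by \eqref{eq:boxoutzon}, is $\intsimple{\,|\overline{W}|\,|\Phi^\dag|\,}$. This is the same matrix, since $|\overline{W}|\,|\Phi^\dag| = \overline{w}\,\1^\top|\Phi^\dag|$.

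No step is a real obstacle. The only points needing care are using the \emph{right} inverse identity $\Phi\Phi^\dag = I$, which is where the rank assumption enters, and checking that the row-wise bound $\overline{w}_i$ factors out of the sum to produce $\overline{w}\,\1^\top|\Phi^\dag|$.
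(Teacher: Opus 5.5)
Your proof is correct and follows essentially the same route as the paper. The paper obtains $\Theta\in(X_+\oplus\intsimple{\,\overline{W}\,})\Phi^\dag$ by citing Lemma~1 of \cite{alanwar2021data}, whose content is your identity $\Theta = X_+\Phi^\dag - E\Phi^\dag$ via $\Phi\Phi^\dag=I$, and then takes the interval hull \eqref{eq:boxoutzon} by summing the absolute values of the generators, which is exactly your entrywise triangle-inequality bound; your version just makes the argument self-contained instead of relying on the cited lemma.
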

            \emph{Proof.} 
            Consider the matrix zonotope
            \begin{equation}
            \MM_\Theta= \left( X_+ \oplus \intsimple{\,\overline{W}\,} \right) \,  \Phi^\dag. 
            \label{eq:MTheta}
            \end{equation}
            From Lemma 1 in \cite{alanwar2021data}, one has $\Sigma_\Theta \subseteq \MM_\Theta$.
            Observe that the generators of $\intsimple{\,\overline{W}\,}$ are the matrices $\overline{w}_i e_i e_j$, for $i=1,\dots,n$, $j=1,\dots,T$, where $e_i$ denotes the $i$-th canonical basis vector. Then, the zonotope \eqref{eq:MTheta} can be 	written as
            $$
            \MM_\Theta= \langle  X_+ \Phi^\dag ; G_1, \dots , G_{nT} \rangle
            $$
            with $G_h=\overline{w}_i e_i e_j^\top \Phi^\dag$, for $h=i+(j-1)n$. 
            Hence, \eqref{eq:intsys} follows by computing $\bbox{\MM_\Theta}$ according to \eqref{eq:boxoutzon}, with
            $$
            \begin{array}{rcl}
            \Delta_S &=& \displaystyle{ \sum_{i=1}^n \sum_{j=1}^T | G_{i+(j-1)n} | = \sum_{i=1}^n \overline{w}_i e_i \sum_{j=1}^T e_j^\top |\Phi^\dag | } \vspace*{2mm}\\ 
            &=&  \overline{w} \, \1^\top | \Phi^\dag |.
            \vspace*{-4mm}
            \end{array}
            $$           
	    \qed	
	    
	    In the remainder of this work we will consider the following partition of the data-based interval system matrices in \eqref{eq:intsys}
	    \begin{align}
	    X_+ \Phi^\dag &= \interval{\Ac\quad\Bc}  , \label{eq:ddmodel1}\\
	    \overline{w} \, \1^\top | \Phi^\dag | &= \interval {\Delta_A\quad \Delta_B}.
        \label{eq:ddmodel2}
	    \end{align}
	    Due to Proposition \ref{prop:intsys}, the unknown system matrices in \eqref{eq:sys} satisfy
	    \begin{equation}\label{eq:sys2}
                \interval{A\quad B}\in\MD =\interval{\Ac\quad\Bc}\oplus \intsimple{\Delta_A\quad \Delta_B}.
            \end{equation}
            Therefore, the properties of any robust predictive control scheme designed on the uncertain system 
            \begin{equation}\label{eq:uncsys}
            x(k+1) = \Ac x(k) + \Bc u(k) + A_{\Delta} x(k) + B_{\Delta} u(k) +w(k),
            \end{equation}
            where $A_{\Delta} = A-\Ac$, $B_{\Delta} = B-\Bc$ and $\left[A_{\Delta}\quad B_{\Delta}\right]\in \intsimple{\Delta_A\quad \Delta_B}=\intsimple{\Delta_S}$, are guaranteed to hold also for the original system \eqref{eq:sys}.

            \subsection{Set-membership identification of Interval System Matrices}\label{subsec:SM}
            
            An alternative to the data-driven uncertain system representation previously introduced, is provided by set-membership system identification~\cite{milanese1991optimal}.
            Since $w(k) \in \intsimple{\overline{w}}$, from \eqref{eq:sys} it follows that
            \begin{equation}
            x(k+1) - \overline{w}
            \;\le\;
            \Theta \phi(k)
            \;\le\;
            x(k+1) + \overline{w},
            \label{eq:data_ineq}
            \end{equation}
            where $\phi(k) = \left[ x(k)^\top \quad u(k)^\top\right]^\top$.
            Note that the constraints in \eqref{eq:data_ineq} are separable across the rows of matrix $\Theta$. Let $\theta_i^T$ denote the $i$-th row of $\Theta$.
            Then, for each state component $x_i$, $i = 1,\dots,n$, one gets
            \begin{equation}
            x_i(k+1) - \overline{w}_i \leq \theta_i^\top\phi(k) \leq
            x_i(k+1) + \overline{w}_i,
            \label{eq:data_ineq_row}
            \end{equation}
            Stacking the inequalities over all time steps $k$ yields the equivalent compact representation
            \begin{equation}\label{eq:data_ineq_row_stack}
                X_{+,i} - \overline{w}_i \1^\top \leq\theta_i^\top\Phi\leq X_{+,i}+\overline{w}_i \1^\top ,
            \end{equation}
             where $X_{+,i}$ is the $i$-th row of $X_+$. 
	   Therefore, the feasible set for each $\theta_i$ is a bounded polyhedron, and the tightest interval bounds on the entries of $\Theta$ can be computed element-wise by solving  $n(n+m)$ linear programs subject to the constraints \eqref{eq:data_ineq_row_stack}. 
       Once such bounds are derived, an uncertainty model in the form \eqref{eq:sys2} is available 
       and a robust predictive control scheme can be designed on system \eqref{eq:uncsys}.

        \subsection{Robust Predictive Control}
            Let us define the nominal system 
            \begin{equation}\label{eq:sys_nominal}
                z(k+1) = \Ac z(k) + \Bc v(k)
            \end{equation}
            and introduce the feedback policy            
            \begin{equation}\label{eq:u_RMPC}
                u(k) = K(x(k)-z(k)) + v(k),
             \end{equation} 
            where $v(k)$ is a nominal control input and $K\in\RR^{m \times n}$ is a gain matrix.
            By using \eqref{eq:sys2}, matrix $A_K=A+BK$ satisfies 
            \begin{equation}
            \label{eq:cAK}
                A_K\in\AK = \AKc \oplus \Mdelta\myvec{I}{K},
            \end{equation}
            where $\AKc = \Ac+\Bc K$ and $\Mdelta=\intsimple{\Delta_S}$.
            
            Let $e(k) = x(k)-z(k)$ be the error between the true and the nominal state. From \eqref{eq:sys} and \eqref{eq:uncsys}, the dynamics of $e(k)$ is given by
            \begin{equation}\label{eq:error_dyn2}
                e(k+1) = A_K e(k) + A_{\Delta} z(k) + B_{\Delta} v(k) + w(k).
            \end{equation}
            Then, from \eqref{eq:cAK}-\eqref{eq:error_dyn2}, the set-valued dynamics
            \begin{equation}\label{eq:Sk_recursion}
                \SS(k+1) = \AK\SS(k) \oplus \Mdelta\left[\begin{array}{c}
                     z(k)\\
                     v(k)
                \end{array}\right] \oplus \WW
            \end{equation}
            is such that if $e(0)\in\SS(0)$, then $e(k)\in\SS(k)$, $\forall k>0$.
            
            With the aim to design a robust predictive control scheme, let us introduce the following variable-horizon robust control problem
            \begin{equation}\label{eq:mpc_tube_exact} 
                \begin{aligned}
                    \underset{N_k,\mathbf{v}(k),\mathbf{z}(k)}{\text{min }} &   J_k(N_k,\mathbf{v}(k),\mathbf{z}(k))\\
                    \text{s.t.} \quad & z_k(0)= x(k)\\
                    & {z_k}(j+1)=\Ac {z_k}(j)+\Bc {v_k}(j)  \\
                    &  {z_k}(j)\oplus\SS_k(j) \subseteq \XX, \quad j=0,\ldots,N_k \\
                    &  v_k(j)\oplus K\SS_k(j)\subseteq \UU, \quad j=0,\ldots, N_k-1 \\
                    &  z_k(N_k) \oplus \SS_k(N_k) \subseteq\XX_{f} \\
                    & N_k \in \mathbb{N}^+
                \end{aligned}
            \end{equation}
            in which $\mathbf{v}_k= \left(v_k(0),\ldots,\,v_k(N_k-1)\right)$ and $\mathbf{z}_k = \left(z_k(0),\ldots,\,z_k(N_k)\right)$ are the sequences of nominal inputs and states, respectively, and 
            the cost $J_k$ is defined as 
            \begin{equation}\label{eq:cost}
                J_k = \gamma N_k + \sum_{j=0}^{N_k-1} \ell(v_k(j)-Kz_k(j))
            \end{equation}
            where $\ell(\cdot)$ is such that $\ell(x)\geq 0,\,\forall x$, and $\gamma> 0$ is a tuning parameter weighting the length of the prediction horizon.
            The choice of including the prediction horizon length $N_k$ among the optimization variables is instrumental to ensure recursive feasibility of the resulting predictive control scheme, as it will be clarified later.
           Similarly to \eqref{eq:Sk_recursion}, the sets $\SS_k(j)$ in \eqref{eq:mpc_tube_exact} are defined by the set-valued dynamics
            \begin{equation}\label{eq:setdyn}
            \SS_k(j+1) = \AK\SS_k(j)\oplus\Mdelta\left[\begin{array}{c}
                 z_k(j)\\
                 v_k(j) 
            \end{array}\right]\oplus\WW
            \end{equation}
with  $\SS_k(0) = \{0\}$, due to the initial constraint $z_k(0) = x(k)$. 
            The terminal set $\mathcal{X}_f$ is a further design element that can be used to guarantee stability properties of the control scheme.  
            
            As in classical tube-based MPC, the constraints of problem \eqref{eq:mpc_tube_exact} coupled with the control law \eqref{eq:u_RMPC}, guarantee that the state and input constraints \eqref{eq:constraint} are robustly satisfied along the prediction horizon. To this aim, \eqref{eq:mpc_tube_exact}  explicitly incorporates uncertainty propagation along the prediction horizon through the sets $\SS_k(j)$, in the same spirit as in \cite{alanwar2022robust,russo2023tube}. In particular, by exploiting \eqref{eq:setdyn}, such sets can be expressed as
            \begin{equation}\label{eq:Sk_response}
            \SS_k(j) = \sum_{i=0}^{j-1}\AK^{j-i-1}\Mdelta\left[\begin{array}{c}
                     z_k(i)\\
                     v_k(i)
                \end{array}\right] \oplus \sum_{i=0}^{j-1}\AK^i\WW.
            \end{equation}
            Unfortunately, the exact computation of the sets $\SS_k(j)$ becomes intractable even for small values of $j$. Notice that a further difficulty stems from the fact that their expression in \eqref{eq:Sk_response} depends on the optimization variables $\mathbf{v}_k$ and $\mathbf{z}_k$ of problem~\eqref{eq:mpc_tube_exact}. This issue has been tackled in \cite{farjadnia2024robust} by replacing the sets $\SS_k(j)$ with a constant outer bound $\SS$, corresponding to a rigid tube. However, this choice turns out to be overly conservative and relies on the knowledge of the covering radius of the dataset. In the next section, a less conservative approximation is proposed, which allows to maintain the time-varying nature of uncertainty propagation along the prediction horizon within the optimization problem, without requiring additional assumptions on the dataset. 
            
            \section{Interval-Matrix Robust Predictive Control}
            \label{sec:intervalMPC}
            
            In order to construct an outer approximation of the sets $\SS_k(j)$, we exploit uncertainty propagation in the form of matrix zonotopes through the operator $\TT$ in \eqref{eq:Topdef}.  In particular, the operator $\TT^j$ in \eqref{eq:IjMapprox} is used to outer bound the set multiplication terms in \eqref{eq:Sk_response}. 
            Let us define the set 
            \begin{equation}\label{eq:Bbox_Delta}
                \BB_k^{\Delta}(j) = \sum_{i=0}^{j-1} \II^{\Delta}(j-i-1)\myvec{z_k(i)}{v_k(i)}
            \end{equation}
	    where
	     \begin{equation}\label{eq:Ibox}
                \II^{\Delta}(j) =  \bbox{\TT^j_{\MDK}(\II_{\Delta})},
            \end{equation}
           and
            $\MDK = \AKc \oplus \intsimple{\Delta_K}$ with $\Delta_K=\Delta_A + \Delta_B |K|$. 
           Similarly, let
	    \begin{equation}\label{eq:Bbox_W}
                \BB^{\WW}(j) = \sum_{i=0}^{j-1} \II^{\WW}(i)
            \end{equation}
            where
             \begin{equation}\label{eq:Ibox_W}
                \II^{\WW}(j) =  \bbox{\TT^j_{\MDK}(\WW)}.
            \end{equation}
            Then, define
             \begin{equation}\label{eq:Bbox}
                \BB_k(j) = \BB_k^{\Delta}(j) \oplus\BB^{\WW}(j).
            \end{equation}
            The following result holds.
             \begin{proposition}
            \label{prop:Bk}
            The sets $\BB_k(j)$ in \eqref{eq:Bbox} satisfy the inclusion
            \begin{equation}\label{eq:SinB}
            \SS_k(j) \subseteq \BB_k(j),\quad \forall k,j.
            \vspace*{2mm}
            \end{equation}
            \end{proposition}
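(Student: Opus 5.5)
Our approach is to work term by term on the explicit expression \eqref{eq:Sk_response}, bounding separately the disturbance-driven sum and the sum driven by the nominal trajectory. The inclusion is then preserved under Minkowski sums, since $\mathcal{A}_1\subseteq\mathcal{B}_1$ and $\mathcal{A}_2\subseteq\mathcal{B}_2$ imply $\mathcal{A}_1\oplus\mathcal{A}_2\subseteq\mathcal{B}_1\oplus\mathcal{B}_2$. It therefore suffices to prove two claims for every $j\ge 0$:
\begin{equation*}
\AK^{j}\Mdelta\subseteq \II^{\Delta}(j), \qquad \AK^{j}\WW\subseteq \II^{\WW}(j).
\end{equation*}
The first claim gives $\AK^{j-i-1}\Mdelta\left[\begin{array}{c} z_k(i)\\ v_k(i)\end{array}\right]\subseteq \II^{\Delta}(j-i-1)\left[\begin{array}{c} z_k(i)\\ v_k(i)\end{array}\right]$, because right-multiplication by a fixed vector preserves inclusion. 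Summing over $i$ then yields $\sum_i\AK^{j-i-1}\Mdelta[\cdot]\subseteq\BB_k^{\Delta}(j)$, and the second claim gives the $\WW$ part in the same way.

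\textbf{Step 1: $\AK\subseteq\MDK$.} Take $[A_\Delta\ B_\Delta]\in\intsimple{\Delta_A\ \Delta_B}$. Then $A_\Delta+B_\Delta K$ has entries bounded in absolute value by $\Delta_A+\Delta_B|K|=\Delta_K$. This uses $|B_\Delta K|\le|B_\Delta||K|\le\Delta_B|K|$, taken elementwise. Hence $\AK\subseteq\AKc\oplus\intsimple{\Delta_K}=\MDK$.

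\textbf{Step 2: induction on $j$.} We treat $\Mdelta$ and $\WW$ as matrix zonotopes: an interval set $\intsimple{\Delta}$ is the zonotope with generators $\mathcal{E}(\Delta)$, and $\WW$ is a column zonotope. The key observation is that for a product of sets,
\begin{equation*}
\AK^{j}\Mdelta=\{A_K^j M\}\subseteq \AK\,(\AK^{j-1}\Mdelta),
\end{equation*}
since the same $A_K$ appearing in all $j$ factors is one particular choice among independent choices. This step must be handled carefully because $\AK^j$ is defined as $\{A^j\}$ rather than as the $j$-fold set product. Suppose $\AK^{j-1}\Mdelta\subseteq\TT^{j-1}_{\MDK}(\Mdelta)$. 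Then
\begin{equation*}
\AK^{j}\Mdelta\subseteq\MDK\,\TT^{j-1}_{\MDK}(\Mdelta)\subseteq\TT_{\MDK}\big(\TT^{j-1}_{\MDK}(\Mdelta)\big)=\TT^{j}_{\MDK}(\Mdelta).
\end{equation*}
The first inclusion uses Step 1 and monotonicity of set products. The second is Proposition~\ref{prop:overapprox} applied to the matrix zonotope $\TT^{j-1}_{\MDK}(\Mdelta)$. The base case $j=0$ is trivial. The same argument applies verbatim with $\WW$ in place of $\Mdelta$.

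\textbf{Step 3: boxing.} By construction $\bbox{\cdot}$ in \eqref{eq:boxoutzon} is an outer approximation of a zonotope. Therefore $\TT^j_{\MDK}(\Mdelta)\subseteq\II^{\Delta}(j)$ and $\TT^j_{\MDK}(\WW)\subseteq\II^{\WW}(j)$, which proves the two claims. Substituting them into \eqref{eq:Sk_response} term by term gives \eqref{eq:SinB} for all $k$ and $j$; the case $j=0$ holds because $\SS_k(0)=\{0\}=\BB_k(0)$ (empty sums).

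The main obstacle is the induction step in Step 2: the power set $\AK^j=\{A^j\}$ must be correctly relaxed into an iterated product of independent copies, and Proposition~\ref{prop:overapprox} must legitimately apply at each stage. This requires checking that each iterate $\TT^{j-1}_{\MDK}(\cdot)$ is again a matrix zonotope of compatible dimensions, which holds by the form of \eqref{eq:Topdef}.
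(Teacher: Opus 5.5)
Your proof is correct and follows the same route as the paper's. Both establish $\AK\subseteq\MDK$, deduce $\AK^{j}\Mdelta\subseteq\II^{\Delta}(j)$ and $\AK^{j}\WW\subseteq\II^{\WW}(j)$ through the iterated operator $\TT$ followed by $\bbox{\cdot}$, and then compare \eqref{eq:Sk_response} with \eqref{eq:Bbox} term by term. You simply make explicit what the paper leaves implicit: the elementwise bound that yields $\Delta_K$, the induction via Proposition~\ref{prop:overapprox} that underlies \eqref{eq:IjMapprox}, and the outer-approximation property of the box.
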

            \emph{Proof.} 
            From \eqref{eq:cAK} 
            one has that $\AK  \subseteq \MDK$. Hence, as a consequence of \eqref{eq:IjMapprox} one gets $\AK^{j}\Mdelta \subseteq \II^{\Delta}(j)$ in \eqref{eq:Ibox} and $\AK^{j}\WW \subseteq \II^{\WW}(j)$ in~\eqref{eq:Ibox_W}. Therefore, the result follows by comparing \eqref{eq:Sk_response} and \eqref{eq:Bbox}.\qed

	   By exploiting the outer approximation of the sets $\SS_k(j)$ provided by Proposition \ref{prop:Bk},
   	   the robust optimal control problem~\eqref{eq:mpc_tube_exact} is reformulated as follows
	    \begin{subequations}\label{eq:mpc_tube_box}
                \begin{align}
                \min_{N_k,\mathbf v(k),\mathbf z(k)}\ & 
                J_k(N_k,\mathbf v(k),\mathbf z(k)) \notag\\
                \text{s.t.}\quad 
                & z_k(0)= x(k) \label{subeq:mpc_init}\\
                & z_k(j+1)=\Ac z_k(j)+\Bc v_k(j), \label{subeq:mpc_dynamic}\\
                & \begin{aligned}
                   z_k(j)\oplus\BB_k(j) &\subseteq \XX,   \quad  j=0,\dots,N_k
                  \end{aligned}
                  \label{subeq:mpc_state}\\
                & \begin{aligned}
                   v_k(j)\oplus K\BB_k(j) &\subseteq \UU,  \quad  j=0,\dots, N_k-1
                  \end{aligned}
                  \label{subeq:mpc_input}\\
                & z_k(N_k) \oplus \BB_k(N_k) \subseteq \XX_f 
                  \label{subeq:mpc_terminal}\\
                & N_k \in \mathbb{N}^+. \notag
                \end{align}
            \end{subequations}         
            Denoting by $N^*_k$, $\mathbf{v}^*(k)$, $\mathbf{z}^*(k)$ the optimal solution of problem~\eqref{eq:mpc_tube_box}, at each time $k$ the nominal control input in \eqref{eq:u_RMPC} is chosen as
            \begin{equation}\label{eq:optvsel}
                v(k)=v^*_k(0).
            \end{equation}
             like in standard MPC schemes.
            
             The only difference between problems \eqref{eq:mpc_tube_box} and \eqref{eq:mpc_tube_exact} is that the constraints \eqref{subeq:mpc_state}-\eqref{subeq:mpc_terminal} are more conservative than the corresponding constraints in problem \eqref{eq:mpc_tube_exact}, due to the inclusion \eqref{eq:SinB}. This implies that the closed-loop system resulting from the application of the control law \eqref{eq:mpc_tube_box}-\eqref{eq:optvsel} to system \eqref{eq:sys} will robustly satisfy the constraints \eqref{eq:constraint}, whatever are the system matrices $A,B$ and the process disturbance realization $w(k)\in\WW$.  
             
             It is worth stressing that, while the sets $\BB_k^{\Delta}(j)$ depend on the nominal state and input sequences $\mathbf{z}_k$ and $\mathbf{v}_k$, which are optimization variables of problem \eqref{eq:mpc_tube_box}, the interval matrices $\II^{\Delta}(j)$ and the sets  $\II^{\WW}(j)$, $\BB^{\WW}(j)$ can be computed offline for all $j$, since they are independent of $\mathbf{z}_k$ and $\mathbf{v}_k$. This is a key feature of the proposed approach because it allows to significantly limit the computational burden required for the online solution of problem \eqref{eq:mpc_tube_box}.

%

In order to derive the closed-loop properties of the control scheme \eqref{eq:mpc_tube_box}-\eqref{eq:optvsel}, the following assumptions are enforced.
            \begin{assumption}\label{assum:K}
                The matrix $K$ in \eqref{eq:u_RMPC} is such that $A_K = A+B K$ is a Schur matrix for all $\left[A\quad B\right]\in\MD$ in \eqref{eq:sys2}.
            \end{assumption}

	\begin{assumption}\label{assum:RPI}
            The terminal set $\XX_f$ in \eqref{subeq:mpc_terminal}
            is a robust positive invariant (RPI) set for the system $ x(k+1) = A_K x(k) +w(k)$, which means that $A_K x+w\in\XX_f$, $\forall x\in \XX_f$, $\forall A_K \in \AK$ in \eqref{eq:cAK}, and $\forall w\in\WW$. Moreover, $\XX_f$ is such that $\XX_f \subseteq\XX$ and  $K\XX_f \subseteq\UU$. 
            \end{assumption}
              
            \begin{assumption}\label{assum:P0}
                Problem~\eqref{eq:mpc_tube_box} is feasible at time $k=0$ with associated optimal cost $J_0^*$.
            \end{assumption}
            
            Assumptions \ref{assum:K}-\ref{assum:P0} are common in robust MPC literature. In particular, methods for computing RPI sets include iterative algorithms \cite{kolmanovsky1998theory},~\cite[Ch.~10]{borrelli2017predictive} or LMI-based approaches \cite[Ch.~5]{CANNONbook}, leading to ellipsoidal or polytopic terminal sets.

            We are now ready to state the main results of this section.
            
            \begin{theorem}\label{thm:RF}
            Let Assumptions~\ref{assum:K}-\ref{assum:P0} be satisfied. Then, problem~\eqref{eq:mpc_tube_box} admits a feasible solution for all $k>0$, for any $\left[A\quad B\right]\in\MD$ and any realization of $w(k)\in\WW$.
            \end{theorem}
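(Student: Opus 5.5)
The plan is the usual shifted-candidate argument of tube-based MPC, adapted to the fact that the tube is re-initialized at every step ($\SS_k(0)=\{0\}$ and $z_k(0)=x(k)$). I argue by induction on $k$, with Assumption~\ref{assum:P0} as the base case. So assume problem~\eqref{eq:mpc_tube_box} is feasible at time $k$ with optimal solution $N_k^*$, $\mathbf{v}^*(k)$, $\mathbf{z}^*(k)$, and build an explicit feasible candidate at time $k+1$. Throughout I use the identity $x(k+1)=z^*_k(1)+e$, where by~\eqref{eq:error_dyn2} with $e(k)=0$ the quantity $e := A_\Delta z^*_k(0)+B_\Delta v^*_k(0)+w(k)$ belongs to $\SS_k(1)$. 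By Proposition~\ref{prop:Bk}, $\SS_k(1)\subseteq\BB_k(1)=\II_\Delta[z_k^*(0);v_k^*(0)]\oplus\WW$.

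\emph{Case $N_k^*=1$.} I take $N_{k+1}=1$, $z_{k+1}(0)=x(k+1)$ and $v_{k+1}(0)=Kx(k+1)$. Constraint~\eqref{subeq:mpc_terminal} at time $k$ gives $x(k+1)\in z_k^*(1)\oplus\BB_k(1)\subseteq\XX_f$. Assumption~\ref{assum:RPI} then yields $x(k+1)\in\XX$ and $Kx(k+1)\in\UU$; since $\BB_{k+1}(0)=\{0\}$, this settles the $j=0$ constraints. For the terminal constraint, $z_{k+1}(1)\oplus\BB_{k+1}(1)=\AKc x(k+1)\oplus\II_\Delta\myvec{I}{K}x(k+1)\oplus\WW$. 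This is exactly $\{A_Kx(k+1)+w:\ A_K\in\AK,\ w\in\WW\}$, which is contained in $\XX_f$ by the RPI property in Assumption~\ref{assum:RPI}. Hence the candidate is feasible.

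\emph{Case $N_k^*\ge 2$.} I take $N_{k+1}=N_k^*-1$ and the corrected shifted inputs $v_{k+1}(j)=v_k^*(j+1)+K\big(z_{k+1}(j)-z_k^*(j+1)\big)$, with the nominal states generated by~\eqref{subeq:mpc_dynamic} from $z_{k+1}(0)=x(k+1)$. A direct computation gives $z_{k+1}(j)=z_k^*(j+1)+\AKc^{\,j}e$, and therefore $\myvec{z_{k+1}(j)}{v_{k+1}(j)}=\myvec{z^*_k(j+1)}{v^*_k(j+1)}+\myvec{I}{K}\AKc^{\,j}e$. Feasibility of~\eqref{subeq:mpc_state}--\eqref{subeq:mpc_terminal} at time $k+1$ would then follow from the single nesting inclusion
\[
z_{k+1}(j)\oplus\BB_{k+1}(j)\subseteq z_k^*(j+1)\oplus\BB_k(j+1),\qquad j=0,\dots,N_k^*-1,
\]
together with its image under $K$, which handles the input constraints since $v_{k+1}(j)-v_k^*(j+1)=K(z_{k+1}(j)-z_k^*(j+1))$.

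To prove the nesting I would expand both sides using~\eqref{eq:Bbox_Delta}--\eqref{eq:Bbox}, splitting the $\BB_{k+1}^{\Delta}(j)$ terms into a part driven by the shifted nominal pair and a part driven by $\myvec{I}{K}\AKc^{\,i}e$. The shifted part should be matched by the terms $i=1,\dots,j$ of $\BB_k^{\Delta}(j+1)$. The $\WW$ part should be matched by $\BB^{\WW}(j+1)=\BB^{\WW}(j)\oplus\II^{\WW}(j)$. What remains to show is that the error contribution
\[
\AKc^{\,j}e\oplus\sum_{i=0}^{j-1}\II^\Delta(j-i-1)\myvec{I}{K}\AKc^{\,i}e
\]
is absorbed, as $e$ ranges over $\II_\Delta[z_k^*(0);v_k^*(0)]\oplus\WW$, by the unused terms $\II^\Delta(j)[z_k^*(0);v_k^*(0)]\oplus\II^{\WW}(j)$. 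Intuitively this holds because this expression is an outer bound of $A_K^j e$ written as $\AKc^{\,j}$ plus the perturbation expansion of $(\AKc+\Delta)^j$. Rigorously, it requires showing that the expression is contained in $\bbox{\TT^j_{\MDK}(\{e\})}$ and then using the monotonicity and distributivity of $\TT$ and $\bbox{\cdot}$ over $\oplus$.

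This absorption step is the main obstacle. A box of a sum of products is generally not contained in a product of boxes, and subadditivity of $\bbox{\cdot}$ goes in the wrong direction for a naive term-by-term comparison. I would first try to verify directly from~\eqref{eq:Topdef}--\eqref{eq:Fgen} a "telescoping" inclusion of the form $\AKc^{\,j}\II\oplus\sum_i \II^{\Delta}(j-i-1)\myvec{I}{K}\AKc^{\,i}\II\subseteq\bbox{\TT^j_{\MDK}(\II)}$ for interval matrices $\II$. This should follow by induction on $j$, since each application of $\TT_{\MDK}$ adds generators $F=\Delta_K(\cdot)$ that dominate the perturbation terms. If this fails, the fallback is to exploit that the constraint sets are convex polytopes, and to check the support-function inequalities generator by generator.
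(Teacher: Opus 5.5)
Your construction matches the paper's. The $N_k^*=1$ candidate is the same. For $N_k^*\ge 2$, your corrected shifted inputs reproduce exactly the paper's candidate $z_{k+1}(j)=z_k^*(j+1)+\AKc^{\,j}(\delta(k)+w(k))$. You use the same nesting inclusion and the same reduction, after splitting $e=\delta(k)+w(k)$, to the absorption inclusion $\AKc^{\,j}\delta(k)\oplus\sum_{i=0}^{j-1}\II^{\Delta}(j-i-1)\myvec{I}{K}\AKc^{\,i}\delta(k)\subseteq\II^{\Delta}(j)\xi_k^*(0)$ and its analogue for $w(k)$ with right-hand side $\II^{\WW}(j)$.

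The gap is that this absorption step, which you flag as the main obstacle, is the whole technical content of the proof, and you leave it unproven. Your intuition cannot be turned into a proof because it gives containment in the wrong direction. Write $A_K=\AKc+D\myvec{I}{K}$ with $D\in\Mdelta$ and telescope $A_K^j-\AKc^{\,j}$: this shows the left-hand side \emph{contains} $\AK^j e$. That says nothing about it being \emph{contained} in $\II^{\Delta}(j)\xi_k^*(0)\oplus\II^{\WW}(j)$. Your ``rigorous'' route has the same problem. Containment in $\bbox{\TT^j_{\MDK}(\{e\})}$, monotonicity of $\TT_{\MDK}$ (which acts on generator representations, not on sets), and distributivity over $\oplus$ each need the explicit generator structure of the iterates anyway. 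As you noticed, generic set algebra cannot close the argument. What is needed is the \emph{exact} radius of $\II^{\Delta}(j)$, which is the box of the $j$-fold zonotope iterate, not an iterate of boxes.

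The paper gets this from two lemmas. First, by induction, $\TT^j_{\MDK}(\Mdelta)=\langle 0;\,\AKc^{\,j}\EE(F_0),\AKc^{\,j-1}\EE(F_1),\dots,\EE(F_j)\rangle$, with $F_0=\Delta_S$ and $F_{j+1}=\Delta_K\sum_{i=0}^{j}|\AKc^{\,j-i}|F_i$. Since the $F_i^{(k)}$ are nonnegative single-entry matrices, $\sum_k|\AKc^{\,q}F_i^{(k)}|=|\AKc^{\,q}|F_i$. Hence $\II^{\Delta}(j)=\intsimple{\sum_{i=0}^{j}|\AKc^{\,j-i}|F_i}$ exactly, and similarly for $\II^{\WW}(j)$ with $F_0=\overline{w}$. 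Second, $F_j=\Delta_K P_j\Delta_S$ for $j\ge 1$, where $P_1=I$ and $P_{j+1}=|\AKc^{\,j}|+\sum_{h=1}^{j}P_h\Delta_K|\AKc^{\,j-h}|$. The recursion for $F_j$ adds factors on the left and this one on the right, but both expand to the same sum of products of alternating blocks $|\AKc^{\,q}|$ and $\Delta_K^{q}$.

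Now use $\Delta_S|\myvec{I}{K}|=\Delta_K$ and $|\delta(k)|\le\Delta_S|\xi_k^*(0)|$. The left side of the $\delta$-inclusion then lies in $\AKc^{\,j}\delta(k)$ plus a zero-centred box of radius
\[
\left(\sum_{i=0}^{j-1}|\AKc^{\,j-1-i}|\Delta_K|\AKc^{\,i}|+\sum_{h=1}^{j-1}\sum_{i=0}^{j-1-h}|\AKc^{\,j-1-i-h}|\Delta_K P_h\Delta_K|\AKc^{\,i}|\right)\Delta_S|\xi_k^*(0)|.
\]
Substitute the recursion for $P_i$ into $\sum_{i=1}^{j}|\AKc^{\,j-i}|\Delta_K P_i$ and re-index. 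This shows the radius above equals the radius of $\II^{\Delta}(j)\xi_k^*(0)$ minus $|\AKc^{\,j}|\Delta_S|\xi_k^*(0)|$, and that last term absorbs the centre $\AKc^{\,j}\delta(k)$. The $w(k)$ case is identical with $\Delta_S|\xi_k^*(0)|$ replaced by $\overline{w}$.

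So your conjectured telescoping inclusion is true, with equal radii, but proving it is exactly this computation. No support-function fallback is needed, since everything reduces to an elementwise comparison of nonnegative radii.
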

            \emph{Proof.} See Appendix.
            
            \begin{theorem}\label{thm:convergence}
                Let Assumptions~\ref{assum:K}-\ref{assum:P0} be satisfied. Then, the closed-loop trajectories $x(k)$ of system~\eqref{eq:sys}, under the control law \eqref{eq:u_RMPC}, \eqref{eq:mpc_tube_box}-\eqref{eq:optvsel}, reach the set $\XX_f$ in at most $\lfloor J_0^*/\gamma\rfloor$+1 steps and converge asymptotically to a set included in $\XX_{\infty} = \sum_{i=0}^{\infty}\TT_{\MDK}^i(\WW)$.
            \end{theorem}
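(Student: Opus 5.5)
The argument follows the standard variable-horizon MPC pattern. Cost decrease forces the horizon down to one within finitely many steps. Once the horizon is one, the state lies in $\XX_f$ and the error dynamics are governed by the interval bounds.

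\emph{Step 1: cost decrease.} I take the shifted candidate used in the proof of Theorem~\ref{thm:RF}. When $N_k^*\geq 2$, the candidate at time $k+1$ has horizon $N_k^*-1$. Its nominal inputs are built from the tail of $\mathbf v^*(k)$ together with the feedback correction $K(x(k+1)-z_k^*(1))$. Because the stage terms can be adjusted this way, its cost should not exceed $J_k^*-\gamma-\ell(v_k^*(0)-Kz_k^*(0))\le J_k^*-\gamma$. I will check that the tail of the stage-cost sum is preserved: the candidate's stage arguments $v-Kz$ shift by $K e - K e=0$ when the inputs are chosen as $v_{k+1}(j)=v_k^*(j+1)+K(\cdot)$ consistently.

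\emph{Step 2: reaching $\XX_f$.} Optimality gives $J_{k+1}^*\le J_k^*-\gamma$ for as long as $N_k^*\geq 2$. Since $J\ge \gamma N\ge\gamma$, this can hold for at most $\lfloor J_0^*/\gamma\rfloor$ steps, so some $\bar k\le \lfloor J_0^*/\gamma\rfloor$ has $N_{\bar k}^*=1$. Then the terminal constraint \eqref{subeq:mpc_terminal} together with Proposition~\ref{prop:Bk} gives $x(\bar k+1)=z^*_{\bar k}(1)+e\in z^*_{\bar k}(1)\oplus\SS_{\bar k}(1)\subseteq\XX_f$. This yields the bound $\lfloor J_0^*/\gamma\rfloor+1$.

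\emph{Step 3: asymptotic set.} After entering $\XX_f$, a feasible choice is $N=1$ with $v=Kx$, admissible by Assumption~\ref{assum:RPI}. Optimal solutions then have $N_k^*=1$ eventually, since the cost of this choice equals $\gamma+\ell(0)$. I also need $\ell(0)=0$ or minimality of $\ell$ at zero; I would argue that $\ell(0)$ is the minimal value, or else use only the RPI invariance, so that $x(k)\in\XX_f$ for all later $k$. For convergence, I write $x(k)=A_K^{k-\bar k}x(\bar k)+\sum_i A_K^i w$ whenever the applied input equals $Kx$. The first term vanishes by Assumption~\ref{assum:K}, and the sum lies in $\sum_i\AK^i\WW\subseteq\sum_i\TT^i_{\MDK}(\WW)=\XX_\infty$ by \eqref{eq:IjMapprox}.

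The main obstacle is that the applied input need not be exactly $Kx$ when $\ell$ has several minimizers. I would handle this through the offset terms $\Mdelta[z;v]$, bounding their contribution by $\BB^\Delta$, and then show that the nominal part converges because $\AKc$ is Schur.
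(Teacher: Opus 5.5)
Your Steps 1--2 coincide with the paper's proof. The shifted candidate \eqref{eq:zv_candidate} leaves the stage arguments unchanged ($\hat v_{k+1}(j)-K\hat z_{k+1}(j)=v_k^*(j+1)-Kz_k^*(j+1)$, since the corrections $K\AKc^j(x(k+1)-z_k^*(1))$ cancel). Hence $J_{k+1}^*\le J_k^*-\gamma$, some $\bar k\le\lfloor J_0^*/\gamma\rfloor$ has $N^*_{\bar k}=1$, and $x(\bar k+1)\in z^*_{\bar k}(1)\oplus\BB^*_{\bar k}(1)\subseteq\XX_f$ as in \eqref{eq:xkinXf}. The main line of your Step 3 is also the paper's. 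The candidate \eqref{eq:candidate2} has cost $\gamma$, and no feasible cost is below $\gamma$. The paper concludes that the optimizer \emph{is} this candidate, so $u(k)=Kx(k)$ and $x(k+1)=A_Kx(k)+w(k)$ for $k>\bar k$.

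The gap is in how you intend to close Step~3. Your worry is justified. The paper tacitly uses $\ell(0)=0$ to get cost $\gamma$, and $\ell(x)=0\Rightarrow x=0$ to conclude the optimizer is the candidate. Neither follows from $\ell\ge 0$, so you cannot ``argue'' that $\ell(0)$ is minimal; it must be assumed (it holds for the quadratic $\ell$ of Section~\ref{sec:numerical}).

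Your fallback, however, fails. If $v_k^*(0)\ne Kx(k)$, then, since $z_k^*(0)=x(k)$, the applied input is $u(k)=v_k^*(0)$. The nominal update $\Ac x(k)+\Bc v_k^*(0)$ is then not an $\AKc$-recursion, so Schurness of $\AKc$ gives you nothing. Bounding $\Mdelta\myvec{z}{v}$ by $\BB^\Delta$ only controls the model mismatch, not the nominal part. Indeed, with $\ell\equiv 0$ (allowed by $\ell\ge0$), every feasible horizon-one solution is optimal. The applied input is then decided by the solver's tie-breaking, and nothing in the problem forces convergence to $\XX_\infty$.

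The fix is to make the assumption explicit: $\ell(0)=0$ and $\ell(x)>0$ for $x\ne 0$. Then every feasible cost is $\ge\gamma N\ge\gamma$ and the candidate attains $\gamma$. So, by induction on $k>\bar k$, every optimizer has $N_k^*=1$ and $\ell(v_k^*(0)-Kx(k))=0$, i.e.\ $u(k)=Kx(k)$. Your expansion, started at $\bar k+1$ rather than $\bar k$ (at $\bar k$ the input need not be $Kx(\bar k)$), then gives convergence to $\sum_i\AK^i\WW\subseteq\XX_\infty$. Note that $\ell(0)=0$ alone already gives $N_k^*=1$ and, by \eqref{eq:xkinXf}, $x(k)\in\XX_f$ for all $k>\bar k$, whichever minimizer is chosen. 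Positive definiteness of $\ell$ is needed only for the convergence claim.
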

            \emph{Proof.} See Appendix.
            
 From a computational viewpoint, problem \eqref{eq:mpc_tube_box} can be solved very efficiently. 
Thanks to the use of interval matrices, constraints in \eqref{subeq:mpc_state}-\eqref{subeq:mpc_terminal} can be reformulated as linear constraints in the following way.
By using \eqref{eq:Bbox_Delta}, \eqref{eq:Bbox_W} and \eqref{eq:Bbox}, the $j$-th state constraint~\eqref{subeq:mpc_state} can be rewritten as
 \begin{equation}
 \label{eq:state_rw}
 z_k(j) \oplus \sum_{i=0}^{j-1}\II^{\Delta}(j-1-i)\myvec{z_k(i)}{v_k(i)}\subseteq \XX\ominus\BB^\WW(j).
 \end{equation}
Since the state constraints $\XX$ is a polytopic set, the tightened constraint set on the right hand side can be expressed as 
$$
\XX\ominus\BB^\WW(j) = \{x\in\RR^n|\,\,H(j) x\leq b(j)\}.
$$
Observing that all interval matrices $\II^{\Delta}(j)$ in~\eqref{eq:Ibox} are centered in zero, let us write them as  $\II^{\Delta}(j) = \intsimple{\Delta_I(j)}$. 
Then, \eqref{eq:state_rw} turns out to be equivalent to 
 \begin{equation}\label{eq:constr_efficient}
H(j) z_k(j) + \left|H(j) \right|\sum_{i=0}^{j-1} \Delta_I(j-i-1)\left|\myvec{z_k(i)}{v_k(i)}\right|\leq b(j)
\end{equation}
which is easily cast as a linear constraint in the variables $\mathbf{z}_k$, $\mathbf{v}_k$
\cite{boyd2004convex}.
The input constraint \eqref{subeq:mpc_input} and the terminal constraint  \eqref{subeq:mpc_terminal} can be handled in the same way.

It is worth remarking that the advantage of the proposed approach stems from the combined use of two distinct features. First, the zonotopic uncertainty propagation along the prediction horizon, through the operator $\TT^j$, permits to explicitly incorporate the effect of the data-driven uncertainty model \eqref{eq:ddmodel1}-\eqref{eq:ddmodel2} in the robust optimal control problem \eqref{eq:mpc_tube_box}.
Secondly, the adoption of the interval-matrix-based tube $\BB_k(j)$ has a twofold effect: (i) it is key to prove recursive feasibility of the predictive control scheme; (ii) it allows one to cast the constraints of problem \eqref{eq:mpc_tube_box} in the computationally lightweight form \eqref{eq:constr_efficient}.

The adoption of a variable-horizon predictive control scheme is instrumental to achieve recursive feasibility. 
This makes problem~\eqref{eq:mpc_tube_box} a mixed-integer program with linear constraints and only one integer variable $N_k$. 
Such a problem can be solved exactly by using standard solvers (see, e.g., \cite{gurobi}), or by adopting efficient heuristic procedures~\cite{leomanni2022variable,persson2024optimization}.

        \section{Numerical results}\label{sec:numerical}
            
            In this section, numerical results for a double-integrator example are presented to evaluate the performance of two control schemes: i) the Data-Driven Interval Matrix Predictive Control (DD-IMPC) scheme, in which the control law \eqref{eq:u_RMPC}, \eqref{eq:mpc_tube_box}-\eqref{eq:optvsel} is applied to the data-driven interval matrices \eqref{eq:intsys}; ii) the Set-Membership Interval Matrix Predictive Control (SM-IMPC) scheme, in which the same control law is applied to the uncertainty model~\eqref{eq:sys2}, estimated via the set-membership procedure in Section~\ref{subsec:SM}. These control strategies are compared to the Tube-Based Zonotopic Predictive Control (TZPC) method introduced in~\cite{farjadnia2024robust}. 
            The simulations are carried out using MATLAB on a laptop equipped with the Apple M4 processor and 16~GB of RAM. The code is written in MATLAB and is publicly available at {https://github.com/RenatoQuartullo/Data-Driven-Interval-Matrices-PC}.

            Consider the discrete-time system~\eqref{eq:sys},
            in which the true system matrices are 
            	\begin{equation*}
            		A=
            		\left[
            		\begin{array}{c c}
            			1&\quad 1\\
            			0&\quad 1 \end{array}\right],
            		\quad
                    B=\left[
            		\begin{array}{lll}
            			0\\
            			1\end{array}\right]
            	\end{equation*}
            and $\WW = \intsimple{\overline{w}}$ with $\overline{w}= [0.1,\,0.05]^\top$.
            The state constraint $\mathcal{X}$ is such that $$\myvec{-12}{-4}\leq x(k) \leq\myvec{12}{4},$$ while the input command must satisfy $|u(k)|\leq 2$. The matrix gain $K$ and the terminal set $\XX_f$ are calculated following the procedures described in~\cite[Sec.~5.2]{CANNONbook} and~\cite[Ch.~10]{borrelli2017predictive}, respectively. In DD-IMPC and SM-IMPC the function $\ell(\cdot)$ in the cost $J_k$ is quadratic, i.e., $J_k = \gamma N_k+\sum_{j=0}^{N_k-1} \|v_k(j)-Kz_k(j)\|_2^2$ with $\gamma = 1$. The variable-horizon problem~\eqref{eq:mpc_tube_box} is solved by enumeration for each horizon length $N_k \in\{ 1,\ldots,10\}$.
            The true system is used only to generate the data matrices $X_-,\,X_+,\,U_-$, collecting a trajectory of length $T = 50$.

            First, we evaluate the feasible domain (FD) of the considered control schemes and its sensitivity to the available data. The FDs are computed as the convex hull of the initial conditions for which the initial optimization problem is feasible. A total of 300 initial conditions are generated by uniformly gridding the admissible state set~$\XX$. 
            To ensure a fair comparison, the feasible domain of TZPC (which relies on a fixed prediction horizon) is defined as the set of all initial states for which the corresponding optimization problem is feasible for at least one horizon length $N \in \{1,\ldots,10\}$. The FD evaluation is repeated 50 times using independently generated input-state datasets.
            Table~\ref{tab:FD} reports for each method the average area of the FDs, its standard deviation and the number of datasets for which the FD is empty. The results show that DD-IMPC achieves a larger FD on average and exhibits significantly lower sensitivity to the available data compared to TZPC. 
            It is also worth noting that the FD of DD-IMPC is very close to that yielded by SM-IMPC (which is essentially unaffected by the dataset), although the latter exploits an intermediate identification step.
            Figure~\ref{fig:FD} illustrates the FDs obtained in one representative run in which TZPC produces a non-empty feasible domain.
            \begin{table}[t]
                \centering
                \caption{Statistics of the feasible domain.}
                \begin{tabular}{lccc}
                \hline
                Method & Average & Standard Deviation & Empty FD \\
                \hline
                DD-IMPC & 147.2 & 13.4 & 0 \\
                TZPC & 98.5 & 62.2 & 14 \\
                SM-IMPC & 162.8 & 0 & 0 \\
                \hline
                \end{tabular}
                \label{tab:FD}
            \end{table}
            \begin{figure}[h]
                \centering
                \includegraphics[width=\columnwidth]{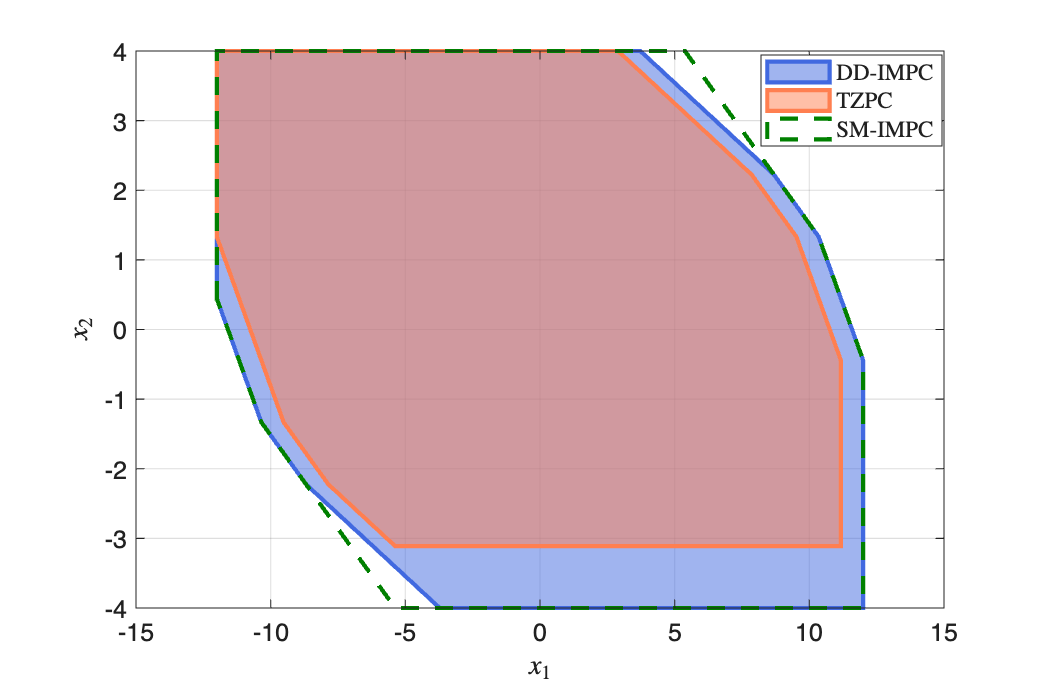}
                \caption{Feasible domain comparison.}
                \label{fig:FD}
            \end{figure}

            To assess the sensitivity of the considered control schemes to the size of the process disturbance, we consider $\WW = \epsilon_W \intsimple{\overline{w}}$, with $\epsilon_W \in \{0.2,0.4,\ldots,0.8,2\}$. For each value of $\epsilon_W$, the FD is estimated 50 times using independently generated datasets. The average area of the resulting FDs is reported in Fig.~\ref{fig:FD_W} (solid lines), along with the minimum and maximum values across the runs (dashed lines), for the different values of $\epsilon_W$. It can be observed that, on average, DD-IMPC yields a larger feasible domain than TZPC. Moreover, TZPC exhibits a strong sensitivity to $\epsilon_W$, as the area of the FD drops significantly for larger values of the disturbance bound. Moreover, the low data sensitivity of DD-IMPC is confirmed by the small gap between the minimum and maximum area profiles, whereas this effect is much more pronounced for TZPC. 
            In addition, for $\epsilon_W \geq 0.8$, TZPC yields an empty feasible region in some runs, while for $\epsilon_W \geq 1.8$ all datasets generate an infeasible problem. 
            Once again, SM-IMPC achieves the largest feasible domain and appears to be  essentially unaffected by $\epsilon_W$. In particular Fig.~\ref{fig:FD_W} suggests that 
            DD-IMPC is competitive with SM-IMPC for moderate levels of process disturbance, while it turns out to be more conservative for higher values of $\epsilon_W$.
            \begin{figure}[h]
                \centering
                \includegraphics[width=\columnwidth]{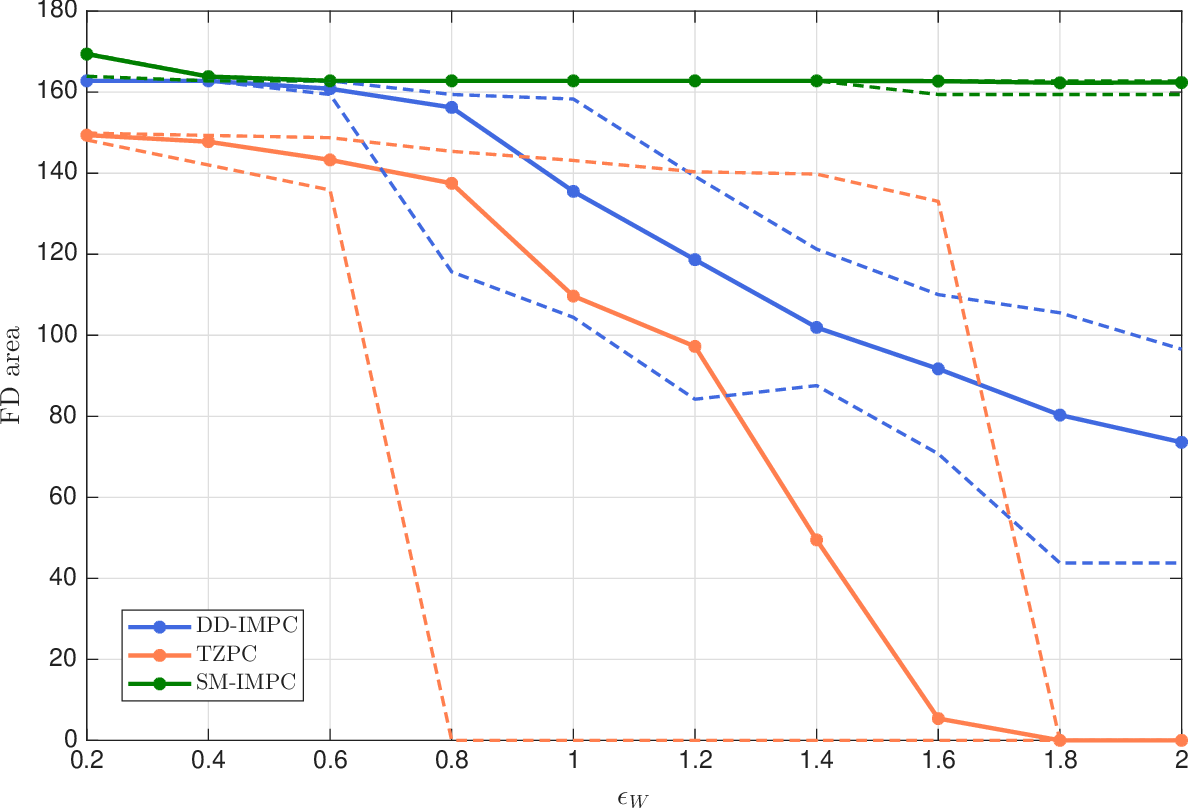}
                \caption{Area of the feasible domains as a function of $\epsilon_W$. The curves show the average (solid lines), minimum and maximum values (dashed lines) over 50 runs.}
                \label{fig:FD_W}
            \end{figure}

            Fig.~\ref{fig:MC} shows the trajectories of the closed-loop system
            resulting from one instance of the DD-IMPC scheme, for 100 different realizations of the process disturbance sequence $w(k)\in\WW$. 
            It can be observed that the constraints~\eqref{eq:constraint} are robustly satisfied and that all trajectories converge to the set $\XX_\infty$, in accordance with Theorem~\ref{thm:convergence}. 
            \begin{figure}
                \centering
                \includegraphics[width=\columnwidth]{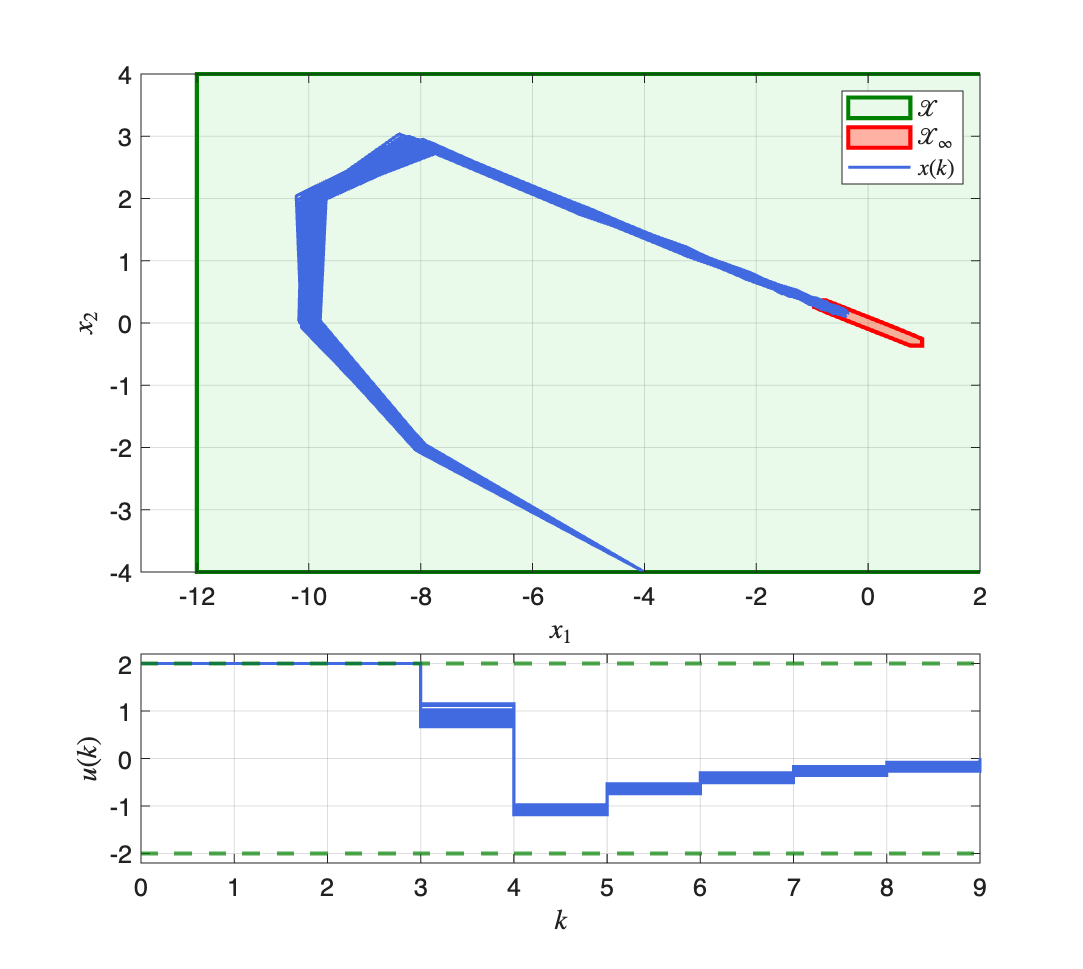}
                \caption{Closed-loop trajectories of the state (top) and input (bottom), for different realization of process disturbance sequences. The set $\XX_\infty$ is shown in red, while the state and input constraints are shown in green.}
                \label{fig:MC}
            \end{figure}
            
            In this setting, the average runtime required to solve one instance of problem~\eqref{eq:mpc_tube_box} is 9.5~ms, while that for one iteration of TZPC is 1.6~ms. This is expected, taking into account that TZPC employs a fixed horizon and a rigid tube, while \eqref{eq:mpc_tube_box} is a variable-horizon problem that explicitly accounts for uncertainty propagation along the prediction horizon.
            On the other hand, it is remarkable that the runtime of the proposed method is much smaller than that of existing DDPC techniques which incorporate uncertainty propagation within the optimal control problem, such as~\cite{alanwar2022robust, russo2023tube} that require several minutes per iteration (see \cite{farjadnia2024robust}).

	\section{Conclusions}
    \label{sec:conclusions}

A data-driven robust predictive control scheme has been presented, which is able to deal with the presence of unknown-but-bounded process disturbances affecting the unknown system dynamics. The proposed approach provides a viable alternative to the classical pipeline coupling set-membership system identification and model-predictive control. The combined use of interval matrix uncertainty representation and variable-horizon optimal control is instrumental to guarantee both recursive feasibility of the data-driven control scheme and an efficient uncertainty propagation along the prediction horizon. This translates into a computationally lightweight control scheme, compared to traditional tube-based MPC approaches.
The investigation of alternative data-driven uncertainty representations, and of the associated trade-off between conservatism and computational load of the resulting control scheme, will be the subject of future research.

    \appendix
\section{Appendix}

\subsection{Auxiliary material}

Let us first introduce some properties and lemmas that will be useful in the subsequent proofs.
The following set inclusions hold
\begin{align}
    (\mathcal{A} \oplus \mathcal{B}) \mathcal{C} \subseteq 
    \mathcal{A}\mathcal{C}  \oplus  \mathcal{B}\mathcal{C}, \label{eq:setdp1} \\
    \mathcal{A} ( \mathcal{B} \oplus \mathcal{C}) \subseteq 
    \mathcal{A}\mathcal{B}  \oplus  \mathcal{A}\mathcal{C}.
    \label{eq:setdp2} 
\end{align}
For any matrix $M$ and interval matrix $\II = C \oplus \intsimple{\Delta}$, their product satisfy the inclusions
\begin{align}
& \II M = C M \oplus \intsimple{\Delta}M \;\subseteq\; C M \oplus \intsimple{\Delta |M|}, \label{eq:intxM}    \\    
& M \II  = M C \oplus M \intsimple{\Delta} \;\subseteq\; M C \oplus \intsimple{ |M| \Delta} .\label{eq:intxMleft}        
\end{align}
The sum of two interval matrices
$\II_1 = {C_1} \oplus \intsimple{\Delta_1}$, $\II_2 = {C_2} \oplus \intsimple{\Delta_2}$ is given by
\begin{equation}
\label{eq:intsum}
\II_1 \oplus \II_2 = C_1+C_2 \oplus \intsimple{\Delta_1+\Delta_2},
\end{equation}
while the sum of two zonotopes
$\MM = \langle M;\, G_1 \dots, G_{n_g} \rangle$, 
$\mathcal{N} = \langle N;\, H_1, \dots, H_{n_h} \rangle$is equal to
\begin{equation}
\label{eq:mink_MZ}    
\MM \oplus \mathcal{N} = \langle M + N;\, 
G_1, \dots, G_{n_g},\, H_1, \dots, H_{n_h} \rangle.
\end{equation}
The multiplication between two interval matrices can be over-approximated as follows
\begin{equation}
\label{eq:intprod}
\II_1 \II_2 \subseteq {C_1} {C_2} \oplus \intsimple{ |{C_1}| \Delta_2 + \Delta_1 |{C_2}| + \Delta_1 \Delta_2 }.
\end{equation}
Let $A$ and $M$ be matrices, with $M \geq 0$. Then, the set of matrices $\mathcal{E}(M)$ satisfies
\begin{equation}
\label{eq:AdecM}
\sum_{k=1}^{lq} |A M^{(k)} | =|A| M.
\end{equation}
The following lemmas are instrumental for the proofs of the main results.
\begin{lemma}
\label{lem:Ij}
The set $\II(j)$ defined in \eqref{eq:Ibox} satisfies
\begin{equation}
\label{eq:Ijint}
\II(j)= \intsimple{ \sum_{i=0}^j | \hat{A}_K^{j-i} | F_i }
\end{equation}
where $F_0=\Delta_S$ and
\begin{equation}
\label{eq:Fj}
F_{j+1} = \Delta_K \sum_{i=0}^{j} | \hat{A}_K^{j-i} | F_i, ~~~\mbox{for}~j \geq 0. 
\end{equation}
\end{lemma}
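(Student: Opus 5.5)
The plan is to prove the statement by induction on $j$. Instead of tracking only the box $\II(j)=\bbox{\TT^j_{\MDK}(\Mdelta)}$, I will track the explicit zonotopic form of $\TT^j_{\MDK}(\Mdelta)$. The claim to prove by induction is that $\TT^j_{\MDK}(\Mdelta)$ is a matrix zonotope with center $0$ whose generator list is
$$
\bigl\{\, \hat{A}_K^{j-i} G \;:\; G\in\EE(F_i),\ i=0,\dots,j \,\bigr\},
$$
with $F_i\geq 0$ defined by \eqref{eq:Fj}. Once this holds, \eqref{eq:Ijint} follows by applying \eqref{eq:boxoutzon}. The center is $0$, and the radius is
$$
\sum_{i=0}^j \sum_{G\in\EE(F_i)} |\hat{A}_K^{j-i}G| = \sum_{i=0}^j |\hat{A}_K^{j-i}|F_i,
$$
where the equality uses \eqref{eq:AdecM}, which applies because $F_i\geq 0$.

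For the base case $j=0$, observe that $\Mdelta=\intsimple{\Delta_S}$ is exactly the zonotope $\langle 0;\,\Delta_S^{(1)},\dots,\Delta_S^{(n(n+m))}\rangle$ with $\{\Delta_S^{(k)}\}=\EE(\Delta_S)$. This holds because each generator moves a single entry independently within $[-(\Delta_S)_{ij},(\Delta_S)_{ij}]$. This matches the claim with $F_0=\Delta_S\geq 0$.

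For the inductive step, I apply Proposition \ref{prop:overapprox}, i.e. the definition \eqref{eq:Topdef}, with $\II=\MDK=\AKc\oplus\intsimple{\Delta_K}$, so $C=\AKc$ and $\Delta=\Delta_K$, and with $\MM=\TT^j_{\MDK}(\Mdelta)$, whose center is $M=0$. The new center is $C\cdot 0=0$. The generators $\hat{A}_K^{j-i}G$ are mapped to $\hat{A}_K^{j+1-i}G$ for $i=0,\dots,j$. The added generators are $\EE(F)$ with
$$
F=\Delta_K\Bigl(|0|+\sum_{i=0}^j\sum_{G\in\EE(F_i)}|\hat{A}_K^{j-i}G|\Bigr)=\Delta_K\sum_{i=0}^j|\hat{A}_K^{j-i}|F_i = F_{j+1},
$$
again using \eqref{eq:AdecM}. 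Writing these added generators as $\hat{A}_K^{0}G$ with $G\in\EE(F_{j+1})$ gives exactly the claimed generator list at step $j+1$. Moreover $F_{j+1}\geq 0$, since it is a product and sum of nonnegative matrices ($\Delta_K=\Delta_A+\Delta_B|K|\geq 0$). This closes the induction.

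The main obstacle I expect is pure bookkeeping. One must make sure that $\TT$ is applied to the exact generator representation, not to the box, so that the identity is an equality rather than an inclusion. One must also check that the nonnegativity needed by \eqref{eq:AdecM} is carried through the induction. I also note that $\II(j)$ in the lemma is to be read as $\II^{\Delta}(j)$ from \eqref{eq:Ibox}, and the dimensions are consistent: each $F_i$ is $n\times(n+m)$.
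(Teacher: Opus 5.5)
Your proof is correct and follows essentially the same route as the paper. It is an induction showing that $\TT^j_{\MDK}(\Mdelta)=\langle 0;\,\hat{A}_K^{j}\EE(F_0),\dots,\EE(F_j)\rangle$, using \eqref{eq:AdecM} to identify the added generator matrix as $F_{j+1}$ and then \eqref{eq:boxoutzon}, again with \eqref{eq:AdecM}, to obtain the box. Starting the induction at $j=0$ and explicitly carrying $F_i\geq 0$, which \eqref{eq:AdecM} requires, only make precise what the paper leaves implicit.
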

\emph{Proof.} Let us first prove by induction that 
\begin{equation}
\label{eq:TTj}
\TT^j_{\MDK}(\II_{\Delta})=\langle 0; \hat{A}_K^j \EE(F_0),   \hat{A}_K^{j-1} \EE(F_1), \dots,  \EE(F_j) \rangle.
\end{equation}
For $j=1$, by applying \eqref{eq:Topdef} with $\II=\MDK=\hat{A}_K \oplus \intsimple{\Delta_K}$ and $\MM=\II_{\Delta} = \langle 0;\,\mathcal{E}(\Delta_S)\rangle$ one gets
$$
\TT_{\MDK}(\II_{\Delta})=\langle 0; \hat{A}_K \EE(F_0),  \EE(F_1) \rangle,
$$
with $F_0=\Delta_S$ and $F_1=\Delta_K \Delta_S$, which is in agreement with \eqref{eq:Fj}. Now, let \eqref{eq:TTj} hold for a generic $j$. By applying again \eqref{eq:Topdef},  one obtains
\begin{align*}
    &\TT^{j+1}_{\MDK}(\II_{\Delta})
    = \TT_{\MDK} \circ \TT_{\MDK}^j (\II_{\Delta}) \notag\\
    &= \big\langle 0;\,
       \hat{A}_K^{j+1} \EE(F_0),\,
       \hat{A}_K^{j} \EE(F_1),\, \dots, 
       \hat{A}_K \EE(F_{j}),\,
       \EE(F_{j+1})
    \big\rangle,
\end{align*}
where, using \eqref{eq:Fgen} and \eqref{eq:AdecM}, one has
$$
\begin{array}{rcl}
F_{j+1} &\!=\!& \displaystyle{ \Delta_K \!\!\!\!\! \sum_{i=1}^{n(n+m)}\!\!\!  \left(  |\hat{A}_K^j F_0^{(i)}|\!+ \! |\hat{A}_K^{j-1} F_1^{(i)}|\!+\! \dots \!+\! | F_j^{(i)} | \right) }\\ 
&\!=\!& \Delta_K \left( |\hat{A}_K^j| F_0 + |\hat{A}_K^{j-1} | F_1 + \dots +  F_j \right)
\end{array}
$$
which corresponds to  \eqref{eq:Fj}.
Then, \eqref{eq:Ijint} immediately follows by applying \eqref{eq:boxoutzon} to the right hand side of \eqref{eq:TTj}.   \qed
\begin{lemma}
\label{lem:Pj}
Let us define matrices $P_j$ such that $P_1=I$ and
\begin{equation}
\label{eq:Pj}
P_{j+1} = | \hat{A}_K^j | + \sum_{h=1}^{j} P_h \Delta_K | \hat{A}_K^{j-h} |, ~~~\mbox{for}~j \geq 1. 
\end{equation}
Then, matrices $F_j$ in \eqref{eq:Fj} satisfy
\begin{equation}
\label{eq:FjPj}
F_{j} = \Delta_K P_j \Delta_S, ~~~\mbox{for}~j \geq 1. 
\end{equation}
\end{lemma}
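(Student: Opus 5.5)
Lemma~\ref{lem:Pj} will be proved by strong induction on $j$, using the recursion \eqref{eq:Fj} for $F_{j+1}$ together with the definition \eqref{eq:Pj} of $P_{j+1}$. Throughout, $P_j$, $\Delta_K$, $\Delta_S$ and $|\hat{A}_K^i|$ are nonnegative matrices. Only linearity of matrix products and sums is needed; no absolute values have to be manipulated beyond their appearance as fixed factors.

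\emph{Base case.} For $j=1$, \eqref{eq:Fj} with $j=0$ gives $F_1=\Delta_K |\hat{A}_K^0| F_0=\Delta_K\Delta_S$. This equals $\Delta_K P_1\Delta_S$ since $P_1=I$.

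\emph{Inductive step.} Assume $F_h=\Delta_K P_h\Delta_S$ for all $1\le h\le j$, with $j\ge 1$. In \eqref{eq:Fj}, split off the $i=0$ term, where $F_0=\Delta_S$:
\begin{equation*}
F_{j+1}=\Delta_K |\hat{A}_K^{j}|\Delta_S+\Delta_K\sum_{h=1}^{j}|\hat{A}_K^{j-h}| F_h .
\end{equation*}
Substituting the induction hypothesis turns the sum into $\sum_{h=1}^{j}|\hat{A}_K^{j-h}|\Delta_K P_h\Delta_S$. Factoring $\Delta_K$ on the left and $\Delta_S$ on the right gives
\begin{equation*}
F_{j+1}=\Delta_K\Big(|\hat{A}_K^{j}|+\sum_{h=1}^{j}|\hat{A}_K^{j-h}|\Delta_K P_h\Big)\Delta_S .
\end{equation*}

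\emph{Main obstacle: the order of factors.} The bracket above has the form $|\hat{A}_K^{j-h}|\Delta_K P_h$, whereas \eqref{eq:Pj} is written as $P_h\Delta_K|\hat{A}_K^{j-h}|$. These need not coincide, because the matrices do not commute.

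To handle this, define the auxiliary sequence $\tilde P_1=I$ and $\tilde P_{j+1}=|\hat{A}_K^j|+\sum_{h=1}^j|\hat{A}_K^{j-h}|\Delta_K\tilde P_h$. The computation above proves $F_j=\Delta_K\tilde P_j\Delta_S$ directly. It then remains to show $\tilde P_j=P_j$. Both sequences expand into the same sum over ordered compositions: namely, all products
\begin{equation*}
|\hat{A}_K^{a_0}|\Delta_K|\hat{A}_K^{a_1}|\Delta_K\cdots\Delta_K|\hat{A}_K^{a_r}|, \qquad a_0+\dots+a_r+r=j-1 .
\end{equation*}
The first recursion decomposes such a word by its \emph{first} block, and the second decomposes it by its \emph{last} block.

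I would prove this identity by a second strong induction, unrolling each recursion once on both sides. Concretely, show that both $P_{j+1}$ and $\tilde P_{j+1}$ equal
\begin{equation*}
|\hat{A}_K^j|+\sum_{h,h'}|\hat{A}_K^{\cdot}|\Delta_K(\cdots)\Delta_K|\hat{A}_K^{\cdot}|,
\end{equation*}
using the inductive equality $P_h=\tilde P_h$ for $h\le j$ in the middle factor. An equivalent alternative is to verify the closed-form word expansion for each sequence separately by induction.

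If the authors intend literal equality with \eqref{eq:Pj}, this symmetry argument is the step that closes the proof.
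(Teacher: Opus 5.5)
Your proposal is correct and is essentially the paper's argument. The paper likewise expands both recursions into the same sum $\Sigma_{j-1}$ of $2^{j-1}$ alternating products of $|\hat{A}_K^{q}|$ and powers of $\Delta_K$, which is exactly your sum over words $|\hat{A}_K^{a_0}|\Delta_K\cdots\Delta_K|\hat{A}_K^{a_r}|$, decomposed by first versus last block. Your version is more explicit than the paper's one-line proof, since you isolate why a direct induction on \eqref{eq:Pj} does not close (the order of the factors); in your sketched second-induction variant, though, the common form must also include the single-$\Delta_K$ terms $|\hat{A}_K^{h-1}|\Delta_K|\hat{A}_K^{j-h}|$, which match after reindexing $h\mapsto j+1-h$.
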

\emph{Proof.} By recursively expanding the sum in \eqref{eq:Fj}, one gets that $F_j=\Delta_K \Sigma_{j-1} \Delta_S$, where $\Sigma_j$ is the sum of the $2^j$ terms of the form $|M_1^{q_1}||M_2^{q_2}| \dots |M_h^{q_h}|$,
for $1 \leq h \leq j$,
with $M_i \in \{\hat{A}_K,\Delta_K\}$, $M_i \neq M_{i+1}$,
and $1 \leq q_i \leq j$, $\sum_{i=1}^h q_i =j$.
Similarly, by recursively expanding the sum in \eqref{eq:Pj}, one gets $P_j=\Sigma_{j-1}$, thus giving \eqref{eq:FjPj}. \qed

\subsection{Proof of Proposition \ref{prop:overapprox}}

From \eqref{eq:setdp1}, one has
\begin{equation}
\label{eq:IM}
\begin{split}
\II \MM &= (C \oplus \intsimple{\Delta}) \langle M;\, G_1, \ldots, G_g\rangle\\
&\subseteq \langle C M;\, C G_1, \ldots, C G_g \rangle \oplus \intsimple{\Delta}\MM.
\end{split}
\end{equation}
Since $\MM$ can be expressed as 
$
\MM = M \oplus \langle 0;\, G_1, \ldots, G_{n_g} \rangle,
$
by exploiting \eqref{eq:setdp2}, \eqref{eq:intxM} and \eqref{eq:boxoutzon}, one gets
\begin{equation*}
    \begin{split}
        \intsimple{\Delta}\MM & \subseteq \intsimple{\Delta}M \oplus\intsimple{\Delta}\langle0;\,G_1,\ldots,G_g\rangle\\
        & \subseteq \intsimple{\Delta|M|} \oplus \intsimple{\Delta} \intsimple{\sum_{i=1}^{n_g} |G_i|} \\
        & \overset{\eqref{eq:intprod}}{\subseteq} \intsimple{\Delta|M|} \oplus \intsimple{ \Delta \sum_{i=1}^{n_g} |G_i|}\\
        & \overset{\eqref{eq:intsum}}{=} \intsimple{\Delta\left(|M| + \sum_{i=1}^{n_g} |G_i|\right)}=\intsimple{F}\!=\! \langle 0; \EE(F) \rangle.
    \end{split}
\end{equation*}
Hence, the results follows from \eqref{eq:IM}, by using \eqref{eq:mink_MZ} and the definition of $\TT_{\II}(\MM)$ in \eqref{eq:Topdef}. \qed

\subsection{Proof of Theorem~\ref{thm:RF}}\label{app:RF}

The aim of the proof is to show that if problem~\eqref{eq:mpc_tube_box} is feasible at time $k$, then there exists a feasible solution at time $k+1$. We distinguish the cases in which the horizon length of the optimal solution at time $k$ is $N_k^* > 1$ (Case 1) or $N_k^* = 1$ (Case 2).

\textbf{Case 1 ($N_k^* >1$):}
Let us consider the following candidate solution for step $k+1$ 
\begin{equation}\label{eq:zv_candidate}
    \begin{array}{rcl}
        \hat{v}_{k+1}(j) &=&
            v^*_k(j+1) + K\AKc^j\delta(k)+ K\AKc^jw(k), \\
            && \hspace*{3cm} j = 0,\ldots,N^*_k\!-\!2 , \vspace*{1mm}\\
        \hat{z}_{k+1}(j) &=&
            z^*_k(j+1) + \AKc^j\delta(k)+\AKc^jw(k), \\
                        && \hspace*{3cm} j = 0,\ldots,N^*_k-1
    \end{array}
\end{equation}
where
\begin{equation}
\label{eq:deltak}
\delta(k) = x(k+1)-\hat{A} x(k) - \hat{B} u(k) = A_{\Delta} x(k) + B_{\Delta} u(k)
\end{equation}
and $z_k^*(j)$, $v_k^*(j)$ are the optimal nominal states and inputs for problem~\eqref{eq:mpc_tube_box}. Note that the length of this solution is $N_k^*-1$.
From \eqref{eq:deltak}, one has that
\begin{equation}
\label{eq:deltakset}
\delta(k)
 \in \Mdelta\left[\begin{array}{c}
         x(k)\\
         u(k)
    \end{array}\right] 
    = \intsimple{\Delta_S}  \myvec{z_k^*(0)}{v_k^*(0)}.
\end{equation}
Next, we show that \eqref{eq:zv_candidate} satisfies all the constraints of problem~\eqref{eq:mpc_tube_box}.

The initial state $ \hat{z}_{k+1}(0)$ of the candidate solution~\eqref{eq:zv_candidate} satisfies
\begin{equation*}
    \begin{split}
      &  \hat{z}_{k+1}(0) = z^*_k(1)+\delta(k)+w(k)\\ 
      & = \Ac z^*_k(0)+\Bc v^*_k(0)+A_{\Delta}x(k) + B_{\Delta}u(k)+w(k)\\
      &  = (\Ac + A_{\Delta})x(k) + (\Bc + B_{\Delta})u(k)+w(k) 
       = x(k+1)
    \end{split}
\end{equation*}
which corresponds to constraint \eqref{subeq:mpc_init} at time $k+1$.

By using~\eqref{eq:zv_candidate}, we have that
\begin{equation*}
    \begin{split}
        &\hat{z}_{k+1}(j+1)-\AKc^{j+1}\delta(k)- \AKc^{j+1} w(k) = z^*_k(j+2)\\ 
        & = \Ac z_k^*(j+1) + \Bc v_k^*(j+1)\\
        &= \Ac\left(\hat{z}_{k+1}(j)-\AKc^j \delta(k)  -\AKc^j w(k) \right)\\
        & \quad +\Bc\left(\hat{v}_{k+1}(j)-K\AKc^j\delta(k) - K\AKc^j w(k)\right) \\
        &= \Ac\hat{z}_{k+1}(j)\!+\!\Bc \hat{v}_{k+1}(j) - \AKc^{j+1}\delta(k) - \AKc^{j+1} w(k),
    \end{split}
\end{equation*}
which implies $\hat{z}_{k+1}(j+1) = \Ac \hat{z}_{k+1}(j)+\Bc \hat{v}_{k+1}(j)$ and hence satisfaction of constraint~\eqref{subeq:mpc_dynamic}.

To prove that the candidate solution \eqref{eq:zv_candidate} satisfies the state constraint~\eqref{subeq:mpc_state}, let us define 
\begin{align}
&
\BB_k^*(j) = \sum_{i=0}^{j-1}\left\{ \II^{\Delta}(j-i-1) \xi_k^*(i) + \II^{\WW}(i) \right\}
\label{eq:BBkstar}\\
&
\hat\BB_{k+1}(j) = \sum_{i=0}^{j-1} \left\{ \II^{\Delta}(j-i-1) \hat\xi_{k+1}(i) + \II^{\WW}(i) \right\}
\label{eq:BBkp1hat}
\end{align}
where
$$
\xi_k^*(i) = \myvec{z_k^*(i)}{v_k^*(i)}\,,~~~
\hat\xi_{k+1}(i) = \myvec{\hat{z}_{k+1}(i)}{\hat{v}_{k+1}(i)}.
$$
We aim to show that
\begin{equation}
\label{eq:setincl}
\hat{z}_{k+1}(j) \oplus \hat\BB_{k+1}(j) \subseteq z_k^*(j+1) \oplus \BB_k^*(j+1) 
\end{equation}
for $j=0,1,\dots,N_k^*-1$, which implies ~\eqref{subeq:mpc_state} because the right hand side of \eqref{eq:setincl} is included in $\XX$. 
By using the second equation in  \eqref{eq:zv_candidate}, one has that \eqref{eq:setincl} is equivalent to
\begin{equation}
\label{eq:setincl2}
\hat{A}_{K}^j (\delta(k)+w(k)) \oplus \hat\BB_{k+1}(j) \subseteq \BB_k^*(j+1). 
\end{equation}
Exploiting \eqref{eq:zv_candidate}, the left hand side of \eqref{eq:setincl2} becomes
\begin{equation*}
    \begin{aligned}
    &\hat{A}_{K}^j (\delta(k)\!+\!w(k)) \oplus \!\hat\BB_{k+1}(j)
    \\
    &= \hat{A}_{K}^j (\delta(k)\!+\!w(k)) \oplus  \sum_{i=0}^{j-1} \left\{ \II^{\Delta}(j \!-\! i \!-\! 1) \hat\xi_{k+1}(i) + \II^{\WW}(i) \right\} \\
    &= \hat{A}_{K}^j (\delta(k)\!+\!w(k)) \oplus \sum_{i=0}^{j-1} \left\{ \II^{\Delta}(j \!-\! i \!-\! 1) \xi_k^*(i \!+\! 1) + \II^{\WW}(i) \right\} \\
        &\quad
       \oplus \sum_{i=0}^{j-1} \II^{\Delta}(j \!-\! i \!-\! 1)
              M_K \hat{A}_K^i (\delta(k)\!+\!w(k))
    \end{aligned}
\end{equation*}
where $M_K=\myvec{I}{K}$.
Therefore, by writing the right hand side of \eqref{eq:setincl2} according to \eqref{eq:BBkstar}, it turns out that \eqref{eq:setincl2} holds if the following two inclusions are true
\begin{align}	
&\hat{A}_{K}^j \delta(k) \oplus \sum_{i=0}^{j-1} \II^{\Delta}(j \!-\! i \!-\! 1) M_K \hat{A}_K^i \delta(k)  \subseteq \II^{\Delta}(j) \xi_k^*(0),  \label{eq:setincl3} \\
&\hat{A}_{K}^j w(k) \oplus \sum_{i=0}^{j-1} \II^{\Delta}(j \!-\! i \!-\! 1) M_K \hat{A}_K^i w(k)  \subseteq \II^{\WW}(j).  \label{eq:setincl4} 
\end{align}
Let us first prove \eqref{eq:setincl3}. By using Lemmas \ref{lem:Ij} and \ref {lem:Pj} and recalling  $\Delta_K = \Delta_S |M_K|$, its left hand side satisfies
\begin{align}
&\hat{A}_{K}^j \delta(k)
\oplus
\sum_{i=0}^{j-1}
\II(j-i-1) M_K \hat{A}_K^i \delta(k)
\notag \\
&\overset{\eqref{eq:Ijint}}{=}
\hat{A}_{K}^j \delta(k)
\oplus\!
\sum_{i=0}^{j-1}
\intsimple{
    \sum_{h=0}^{j-1-i}
    \!|\hat{A}_K^{j-1-i-h}| F_h
}\!\!M_K \hat{A}_K^i
\delta(k)
\notag \\
&\overset{\eqref{eq:FjPj}}{=}
\hat{A}_{K}^j \delta(k)
\oplus
\sum_{i=0}^{j-1}
\intsimple{
    |\hat{A}_K^{j-1-i}| \Delta_S
}
M_K \hat{A}_K^i
\delta(k)
\notag \\
&\quad\quad
\oplus
\sum_{i=0}^{j-1}
\intsimple{
    \sum_{h=1}^{j-1-i}
    |\hat{A}_K^{j-1-i-h}|
    \Delta_K P_h \Delta_S
}
M_K \hat{A}_K^i
\delta(k)
\notag \\[2mm]
&\overset{\eqref{eq:intxM}}{\subseteq}
\hat{A}_{K}^j \delta(k)
\oplus
\sum_{i=0}^{j-1}
\intsimple{
    |\hat{A}_K^{j-1-i}|
    \Delta_K
    |\hat{A}_K^i|
}
\delta(k)
\notag \\
&\quad\quad
\oplus
\sum_{h=1}^{j-1}
\sum_{i=0}^{j-1-h}
\intsimple{
    |\hat{A}_K^{j-1-i-h}|
    \Delta_K P_h \Delta_K
    |\hat{A}_K^i|
}
\delta(k)
\label{eq:lastlhs}
\end{align}
where the last two equalities exploit property \eqref{eq:intsum} and a reorganization of the sum indexes.
Similarly, the right hand side of \eqref{eq:setincl3} satisfies
\begin{align}
& \II(j) \xi_k^*(0) =  \intsimple{ \sum_{i=0}^{j} | \hat{A}_K^{j-i} | F_i }  \xi_k^*(0) \notag \\
& \overset{\eqref{eq:FjPj}}{=}  \intsimple{| \hat{A}_K^{j} | \Delta_S + \sum_{i=1}^{j} | \hat{A}_K^{j-i} |  \Delta_K P_i \Delta_S}  \xi_k^*(0)  \notag \\
& \overset{\eqref{eq:Pj}}{=}  \intsimple{| \hat{A}_K^{j} | \Delta_S + \sum_{i=1}^{j} | \hat{A}_K^{j-i} |  \Delta_K  | \hat{A}_K^{i-1} | \Delta_S}\xi_k^*(0)\notag \\
& \quad \oplus \intsimple{\sum_{i=1}^{j} | \hat{A}_K^{j-i} | \Delta_K  \sum_{h=1}^{i-1} P_h \Delta_K | \hat{A}_K^{i-1-h} |  \Delta_S }  \xi_k^*(0)  \notag\\
& =  \intsimple{| \hat{A}_K^{j} | \Delta_S + \sum_{i=0}^{j-1} | \hat{A}_K^{j-1-i} |  \Delta_K  | \hat{A}_K^{i} | \Delta_S} \xi_k^*(0)  \notag\\
&\quad\oplus \intsimple{\sum_{i=0}^{j-1} | \hat{A}_K^{j-1-i} | \Delta_K  \sum_{h=1}^{i} P_h \Delta_K | \hat{A}_K^{i-h} |  \Delta_S }  \xi_k^*(0)  \notag \\
& = \intsimple{| \hat{A}_K^{j} | \Delta_S + \sum_{i=0}^{j-1} | \hat{A}_K^{j-1-i} |  \Delta_K  | \hat{A}_K^{i} | \Delta_S }  \xi_k^*(0)  \notag \\
&\quad\oplus\!\intsimple{ \sum_{h=1}^{j-1}  \sum_{i=0}^{j-1-h} | \hat{A}_K^{j-1-i-h} | \Delta_K  P_h \Delta_K | \hat{A}_K^{i} |  \Delta_S } \! \xi_k^*(0).   \label{eq:lastrhs}
\end{align}
Now, recall that from \eqref{eq:deltakset} one has $\delta(k) \in \intsimple{\Delta_S}\xi_k^*(0)$. Therefore, by comparing \eqref{eq:lastlhs} and \eqref{eq:lastrhs}, and using properties \eqref{eq:intprod} and \eqref{eq:intxMleft}, one gets that \eqref{eq:setincl3} holds.
Inclusion \eqref{eq:setincl4} can be proved in the same way, exploiting $w(k) \in \intsimple{\overline{W}}$.
For the input constraint~\eqref{subeq:mpc_input}, a similar reasoning can be applied.

Finally, inclusion \eqref{eq:setincl} allows one to prove also the terminal constraint~\eqref{subeq:mpc_terminal}. In fact, one has
$$
\hat{z}_{k+1}(N_k^*-1) \oplus \hat{\BB}_{k+1}(N_k^*-1) \!\subseteq \!z_k^*(N_k^*)\oplus \BB^*_k(N_k^*) \subseteq \XX_f.
$$

\textbf{Case 2 ($N_k^* = 1$):}
For time instant $k+1$, let us consider the candidate solution
\begin{equation}\label{eq:candidate2}
\begin{array}{l}
     \hat{v}_{k+1}(0) = Kx(k+1),  \\
     \hat{\textbf{z}}_{k+1} = \left(x(k+1),\,\AKc x(k+1)\right). 
\end{array}
\end{equation}
with associated horizon length $\hat{N}_{k+1} = N_k^* =1$.
Initial and dynamic constraints~\eqref{subeq:mpc_init}-\eqref{subeq:mpc_dynamic} are trivially satisfied by the candidate solution~\eqref{eq:candidate2}.
Now, let us show that $x(k+1)\in \XX_f$. From optimality of problem \eqref{eq:mpc_tube_box} (whose optimal horizon length is $N_k^*=1$), we know that $z_k^*(1) \oplus \BB_k^*(1) \subseteq \XX_f$. 
Therefore
\begin{equation}
\label{eq:xkinXf}
    \begin{split}
        & x(k+1) = A x(k) + B u(k) +w(k) \\
        & = \Ac x(k) + \Bc u(k) + \delta(k) +w(k) \\
        & = z^*_{k}(1) + \delta(k) + w(k) \in z^*_{k}(1) \oplus \Mdelta\myvec{z^*_k(0)}{v^*_k(0)} + \WW \\
               & = z_k^*(1) \oplus \BB_k^*(1) \subseteq \XX_f.
    \end{split}
\end{equation}
Hence, the state constraint \eqref{subeq:mpc_state} holds due to $\XX_f \subseteq \XX$.
Since $\BB_{k+1}(0) = \{0\}$ and $ K \XX_f \subseteq \UU$, one has also
$
\hat{v}_{k+1}(0) = Kx(k+1) \in K\XX_f \subseteq \UU,
$
which corresponds to the input constraint~\eqref{subeq:mpc_input}. For the terminal constraint, one has
\begin{equation*}
\begin{split}
&    \hat{z}_{k+1}(1)\oplus \hat{\BB}_{k+1}(1)  = \!\AKc x(k+1) \oplus \Mdelta \! \myvec{x(k+1)}{\!\!Kx(k+1)\!\!} \!\oplus\! \WW \\
    & = \!\AKc x(k+1) \oplus \Mdelta \myvec{I}{K}\!x(k+1) \oplus \WW \\
    & = \left( \AKc \oplus \Mdelta\myvec{I}{K}\right)x(k+1) \oplus \WW\\
    & = \AK x(k+1) \oplus \WW \subseteq \XX_f,
\end{split}
\end{equation*}
thanks to Assumption \ref{assum:RPI}, being $x(k+1)\in \XX_f$. \qed

\subsection{Proof of Theorem~\ref{thm:convergence}}\label{app:convergence}

The cost $\hat{J}_{k+1}$ associated to the candidate solution~\eqref{eq:zv_candidate} satisfies 
\begin{equation*}
    \begin{split}
        \hat{J}_{k+1} & = \gamma\hat{N}_{k+1}+\sum_{j=0}^{\hat{N}_{k+1}-1} \ell\left(\hat{v}_{k+1}(j)-K\hat{z}_{k+1}(j)\right)\\
        & = \gamma (N_k^*-1) + \sum_{j=0}^{N^*_k-2}\! \ell\left({v}^*_{k}(j+1)-K{z}^*_{k}(j+1)\right)\\
        & =\gamma N_k^*-\gamma + \sum_{j=1}^{N^*_k-1} \ell\left({v}^*_{k}(j)-K{z}^*_{k}(j)\right)\\
        & = J^*_k -\gamma - \ell\left({v}^*_{k}(0)-K{z}^*_{k}(0)\right)\leq J^*_k - \gamma.
    \end{split}
\end{equation*}
Therefore, the optimal cost at time $k+1$ is such that
$J_{k+1}^* \leq \hat{J}_{k+1}\leq J_k^* - \gamma$.
This ensures that there exists a time $\bar{k} \leq \lfloor J_0^*/\gamma\rfloor$ at which the optimal horizon length is $N_{\bar{k}}^*=1$, and hence $x(\bar{k}+1) \in \XX_f$ according to~\eqref{eq:xkinXf}.

When $N_k^* =1$, the cost associated with the candidate solution~\eqref{eq:candidate2} is $\hat{J}_{k+1} = \gamma$, which implies that such a solution is also optimal (as the cost value cannot be less than $\gamma$). Consequently, for all $k > \bar{k}$ one has $J^*_k = \gamma$ and the optimal solution of problem~\eqref{eq:mpc_tube_box} is the candidate solution~\eqref{eq:candidate2}.
Then, due to Assumption \ref{assum:RPI}, the trajectories of the closed-loop system will remain indefinitely inside the terminal set $\XX_f$. 
Moreover, since for $k > \bar{k}$ one has $x(k+1) = A_K x(k) +w(k)$, all trajectories will converge to the set $\sum_{i=0}^{+\infty} \AK^i \WW \subseteq \sum_{i=0}^{+\infty} \TT^i_{\MDK}(\WW) =\XX_\infty$. \qed

	\bibliographystyle{ieeetr}
	\bibliography{biblioVH.bib}

\end{document}